\documentclass[conference]{IEEEtran}
\IEEEoverridecommandlockouts

\newcommand{\supp}{\mathop{\mathrm{supp}}}

\newcommand{\F}{\mathbb{F}}
\newcommand{\Fa}{\mathbb{F}_q}
\newcommand{\Fb}{\mathbb{F}_{q^{m_1}}}
\newcommand{\Fc}{\mathbb{F}_{q^{m_2}}}

\newcommand{\G}{\mathcal{G}}
\newcommand{\D}{\mathcal{D}}
\newcommand{\FF}{\mathcal{F}}
\newcommand{\C}{\mathcal{C}}
\renewcommand{\SS}{\mathcal{SS}}
\newcommand{\SC}{\mathcal{SC}}

\newcommand{\va}{\mathbf{a}}
\newcommand{\vb}{\mathbf{b}}
\newcommand{\vc}{\mathbf{c}}
\newcommand{\vA}{\mathbf{A}}

\usepackage{amsthm}
\usepackage[ruled,vlined]{algorithm2e} 
\usepackage{cite}
\usepackage{amsmath,amssymb,amsfonts}
\usepackage{algorithmic}
\usepackage{graphicx}
\usepackage{textcomp}
\usepackage{xcolor}
\def\BibTeX{{\rm B\kern-.05em{\sc i\kern-.025em b}\kern-.08em
    T\kern-.1667em\lower.7ex\hbox{E}\kern-.125emX}}

\newtheorem{theorem}{Theorem}
\newtheorem{lemma}[theorem]{Lemma}
\newtheorem{proposition}[theorem]{Proposition}
\newtheorem{corollary}[theorem]{Corollary}

\newtheorem{definition}{Definition}

\newtheorem{remark}{Remark}

\begin{document}

\title{Schur Products of Constacyclic Codes via the Constacyclic Discrete Fourier Transform
}

\author{\IEEEauthorblockN{Peifeng Lin}
\IEEEauthorblockA{\textit{Faculty of Computational Mathematics and Cybernetics} \\
\textit{Lomonosov Moscow State University}\\
Moscow, Russia \\
Email: plin@cs.msu.ru}

}

\newcommand\copyrighttext{%
   \textcopyright 2026 IEEE. Personal use of this material is permitted. Permission from IEEE must be
obtained for all other uses, in any current or future media, including
reprinting/republishing this material for advertising or promotional purposes, creating new
collective works, for resale or redistribution to servers or lists, or reuse of any copyrighted
component of this work in other works.
}
\newcommand\copyrightnotice{%
\textbf{\Large IEEE Copyright Notice}
\\
{\fbox{\parbox{\dimexpr\textwidth-\fboxsep-\fboxrule\relax}{\copyrighttext}}}
\\
\\
{\fbox{\parbox{\dimexpr\textwidth-\fboxsep-\fboxrule\relax}{Accepted to be published in: Proceedings of the IEEE ISIT 2026, June 28 - July 3, 2026, Guangzhou, China}}}

}

\copyrightnotice
\maketitle

\begin{abstract}
This paper investigates the Schur product of constacyclic codes via the constacyclic discrete Fourier transform (DFT). We first characterize key properties of the constacyclic DFT, highlighting its differences from the ordinary DFT. We then extend the concept of degenerate cyclic codes to constacyclic codes possessing a nontrivial pattern polynomial, thereby facilitating the analysis of their dimension sequences. Building on these tools, we generalize two established methods for computing the square of cyclic codes to compute the Schur product of arbitrary constacyclic codes. Finally, exploiting the inherent combinatorial structure, we derive properties of the Schur product dimension directly from additive combinatorics.
\end{abstract}

\begin{IEEEkeywords}
constacyclic codes, Schur product, constacyclic DFT.
\end{IEEEkeywords}

\section{Introduction}
Constacyclic codes, first defined by E. R. Berlekamp \cite{Berlekamp}, form a fundamental generalization of cyclic and negacyclic codes. Their algebraic structure, corresponding to ideals in the ring $\mathbb{F}[x]/(x^n - \lambda)$, enables efficient encoding and decoding via linear feedback shift registers. As a widespread class of codes encompassing well-known families like Reed–Solomon \cite{GRS} and BCH codes \cite{BCH}, constacyclic codes have been extensively studied in classical coding theory.

The Schur (Hadamard, or component-wise) product of linear codes has evolved from a theoretical construct to a critical tool in modern cryptography. Beyond its classical role in decoding, it now underpins advanced cryptographic primitives such as secure multiparty computation (MPC) and strongly multiplicative secret sharing schemes  \cite{MPC}. In these applications, the core requirement is to find codes whose dual and square both have large minimum distance—a challenging and fundamental problem in the field.

Consequently, the study of squares has been extended to various structured code families. Significant progress has been made for cyclic codes \cite{firstsumset} , affine variety codes \cite{AVcode}, and notably, BCH codes \cite{BCH2}, where both the dimensions of the codes and the bounds on the minimum distance of their squares have been actively investigated. In \cite{Falk}, the authors laid groundwork by characterizing the growth of powers of constacyclic codes, the essence of which is to study whether constacyclic codes are degenerate. However, compared to the detailed analyses available for cyclic codes \cite{firstsumset}, a comprehensive understanding of the Schur product of constacyclic codes remains incomplete. In this paper, we bridge this gap by conducting a systematic study of the Schur product for constacyclic codes. 

The paper is organized as follows. Section \MakeUppercase{\romannumeral 2} presents the necessary background on constacyclic codes, the constacyclic DFT and additive combinatorics. Section \MakeUppercase{\romannumeral 3} develops the constacyclic DFT and its key properties. Section \MakeUppercase{\romannumeral 4} analyzes the Schur product for constacyclic codes. Section \MakeUppercase{\romannumeral 5} summarizes the main results.

\section{PRELIMINARIES}
Let $q$ be a prime power, $n\in \mathbb{N}^{+}$ with $(n,q)=1$, and $\lambda \in \F_q^*$. Denote the multiplicative order of $\lambda$ in $\F_q$ by $o_\lambda$. The splitting field of $x^n-1$ is $\F_{q^{m_1}}$, where $m_1=\operatorname{ord}_n(q)$, and the splitting field of $x^n-\lambda$ is $\F_{q^{m_2}}$, where $m_2=\operatorname{ord}_{n* o_\lambda}(q)$. Let $\xi\in \Fb$ be a primitive $n$-th root of unity and let $\beta\in \Fc$ be an $n$-th root of $\lambda$.

We identify vectors $\va=(a_0,\ldots,a_{n-1})\in \F_q^n$ with polynomials $a(x)=\sum_{i=0}^{n-1}a_i x^i$. Under this identification, the support of $\va$ is $\operatorname{supp}(\va)=\{i\in \mathbb{Z}_n:a_i\neq 0\}$.

A $\lambda$-constacyclic code is an ideal in $\Fa[x]/\langle x^n-\lambda\rangle$. Hence every such code is generated by a unique monic $g(x)$ with $g(x)\mid x^n-\lambda$, and we write $\C=\langle g(x)\rangle$. Its complete generating set with respect to $\xi,\beta$ is $\G_{\xi,\beta}=\{i\in \mathbb{Z}_n:g(\xi^i\beta)\neq 0\}$.

We recall additive combinatorics on $\mathbb{Z}_n$. For $A,B\subset \mathbb{Z}_n$, the sumset is $A+B=\{a+b:a\in A,b\in B\}$, and the $t-$fold iterated sumset is $tA=A+\cdots+A$ (t times). For $A\subset \mathbb{Z}_n$, let $\langle A\rangle$ be the smallest subgroup of $\mathbb{Z}_n$ containing $A$. The smallest coset containing $A$, denoted $\SC(A)$, is $\SC(A)=a+\langle A-A\rangle$ for any $a\in A$. In $\mathbb{Z}_n$, the smallest coset containing the sumset $A+B$ satisfies $\SC(A+B)=\SC(A)+\SC(B)$.

The Schur product of $\vc_1,\vc_2\in \F_q^n$ is component-wise, and for codes $\C_1,\C_2$ we set $\C_1\circ \C_2=\operatorname{span}\{\vc_1\circ \vc_2:\vc_1\in \C_1,\vc_2\in \C_2\}$. The $t$-th Schur power is $\C^{\circ t}=\C\circ\cdots\circ\C$ (t times). We recall that the Schur product of a $\lambda_1$--constacyclic code and a $\lambda_2$--constacyclic code is  $\lambda_1\cdot \lambda_2$--constacyclic. The dimension sequence $\dim(\mathcal{C}^{\circ i})$ of a linear code $\mathcal{C}$ strictly increases until stabilizing. The stabilization exponent $r(\mathcal{C})$ is called the Castelnuovo–Mumford regularity.

Finally, let $\va \in  \Fa^n$. The  $\lambda-$constacyclic DFT of $\va$ with respect to $\xi$, $\beta$ is the vector $\vA=\FF_{\xi,\beta}(\va)\in \Fc^n$, where $A_j=\sum_{i=0}^{n-1}a_i(\xi^{j}\beta)^i$. The inverse transform of $\vA$ is given by  $a_i=\frac{1}{n \beta^i}\sum_{j=0}^{n-1}A_j \xi^{-ij}$. The spectral support of $\va$ with respect to $\xi,\beta$ is  $\SS_{\xi,\beta}(\va)=\{i\in \mathbb{Z}_n:a(\xi^i\beta)\neq 0\}$.

\section{Constacyclic Discrete Fourier Transform}
The constacyclic DFT was first introduced in \cite{CCDFT} for encoding and decoding certain cyclic codes. The constacyclic DFT differs from the ordinary DFT because of the additional factor \(\beta\).

For the ordinary DFT, if \(A_j\) is the \(j\)-th frequency component, then \(A_j^q=A_{qj}\). In contrast, for the constacyclic DFT when \(\beta\notin \mathbb{F}_q^*\), this relation fails;  instead, the Frobenius action introduces an additional shift. More precisely, if \(t\in\mathbb{Z}_n\) is defined by $\xi^t=\beta^{q-1}$, then
$A_j^q=A_{qj+t}$.

Thus the Frobenius action is no longer multiplication by \(q\) alone, but rather the affine map \(j\mapsto qj+t\). This is one of the main differences between the ordinary DFT and the constacyclic DFT.

In this paper, for fixed \(n\) and \(q\), the primitive \(n\)-th root of unity \(\xi\) is fixed. However, for a fixed \(\lambda\), different choices of the \(n\)-th root \(\beta\) are possible. We denote the corresponding \(\lambda\)-constacyclic DFT pair by
$a_i \overset{\beta}{\longleftrightarrow} A_j$, omitting \(\xi\) for simplicity and emphasizing the dependence on \(\beta\).

The work in \cite{Wrongarxiv} assumed that \(\lambda\) has an \(n\)-th root in \(\mathbb{F}_q^*\). In that case, \(\beta\in\mathbb{F}_q\), so \(\beta^{q-1}=1\), implying \(t=0\). Hence the Frobenius action reduces to \(j\mapsto qj\), making the behavior appear identical to that of the ordinary DFT. In this paper, we remove this restriction. The following lemma establishes the existence and uniqueness of such a $t$.
\begin{lemma}
\label{lemma:existence-of-t}
	For any primitive $n-$th root of  unity $\xi\in  \Fb$ and  $n-$th root $ \beta \in \Fc$ of $\lambda\in \Fa$, there exists a unique  $t\in \mathbb{Z}_n$ such that $\xi^t=\beta^{q-1}$. 
\end{lemma}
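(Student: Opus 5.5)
The plan is to show that $\gamma:=\beta^{q-1}$ is an $n$-th root of unity. Once that is established, we use the fact that the $n$-th roots of unity in $\Fc$ form a cyclic group of order $n$ generated by $\xi$.

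\emph{Step 1: $\gamma$ is an $n$-th root of unity.} Compute
\[
\gamma^n=(\beta^n)^{q-1}=\lambda^{q-1}=1,
\]
since $\lambda\in\Fa^*$. So $\gamma$ is a root of $x^n-1$ lying in $\Fc$.

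\emph{Step 2: $\xi$ lies in $\Fc$.} Because $n\mid n\,o_\lambda$, we have $q^{m_2}\equiv 1 \pmod n$. Hence $m_1=\operatorname{ord}_n(q)$ divides $m_2$, and therefore $\Fb\subseteq\Fc$. Thus $\xi$ and $\gamma$ live in the common field $\Fc$.

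\emph{Step 3: existence.} Since $(n,q)=1$, the polynomial $x^n-1$ is separable. Its roots in $\Fc$ therefore form a subgroup of $\Fc^*$ with exactly $n$ elements. This subgroup is cyclic, and since $\xi$ has order exactly $n$, it is generated by $\xi$. The roots are thus precisely $\xi^0,\xi^1,\ldots,\xi^{n-1}$. By Step 1, $\gamma$ is one of them, so $\gamma=\xi^t$ for some $t$.

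\emph{Step 4: uniqueness.} The powers $\xi^0,\ldots,\xi^{n-1}$ are pairwise distinct because $\xi$ has order $n$. Moreover, $\xi^t$ depends only on $t \bmod n$. Hence $t$ is unique as an element of $\mathbb{Z}_n$.

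No step is a real obstacle. The only point requiring care is Step 2: we must place $\xi$ and $\beta^{q-1}$ in the same field, which is handled by the divisibility $m_1\mid m_2$.
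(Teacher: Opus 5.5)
Your proof is correct, but it takes a different route from the paper.

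\textbf{The paper's route.} It fixes a primitive $n\cdot o_\lambda$-th root of unity $\delta\in\Fc$ and writes $\xi=\delta^{i\,o_\lambda}$ with $(i,n)=1$ and $\beta=\delta^{j}\xi^{l}$. The equation $\xi^t=\beta^{q-1}$ then becomes the linear congruence $i\,o_\lambda t\equiv j(q-1)+i\,o_\lambda l(q-1)\pmod{n\,o_\lambda}$. Using $o_\lambda\mid q-1$ and the invertibility of $i$ modulo $n$, this has a unique solution, namely $t\equiv (o_\lambda l+j i^{-1})\frac{q-1}{o_\lambda}\pmod n$.

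\textbf{Your route.} You observe directly that $(\beta^{q-1})^n=\lambda^{q-1}=1$. So $\beta^{q-1}$ lies in the cyclic group of $n$-th roots of unity, which $\xi$ generates, and existence and uniqueness follow at once.

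\textbf{Comparison.} Your argument is shorter and isolates the real reason the lemma holds, namely $\lambda\in\Fa^*$. It also avoids the exponent bookkeeping needed to decompose $\beta$ in terms of $\delta$ and $\xi$. Your Step~2 and the separability remark are not actually needed. Both fields sit in a common algebraic closure, and the $n$ distinct powers of $\xi$ already exhaust the roots of $x^n-1$. What the paper's computation buys is an explicit formula for $t$ in terms of the exponents of $\xi$ and $\beta$. It also sets up the $\delta$-exponent notation that is reused in the proof of Proposition~\ref{proposition:closure2}.
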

\begin{proof}
	Let  $\delta$ be a primitive $n\cdot o_\lambda$-th root of  unity. We can write $\xi=\delta^{i\cdot o_\lambda}$, $\beta=\delta^{j}\xi^l$, where $i\in [1,n-1]$ is fixed by $\xi$  with $(i,n)=1$, $j\in [1,n-1]$ is fixed by  $\lambda$ with $(j,o_\lambda)=1$, and $l\in \mathbb{Z}_n$.
 Expressing $\beta^{q-1}$ and $\xi^t$ as powers of $\delta$:
$$
\beta^{q-1}=\delta^{j(q-1)+i\cdot o_\lambda\cdot l (q-1)}, \xi^t=\delta^{i\cdot o_\lambda \cdot t}.
$$
Equality holds if and only if 
$$
i\cdot o_\lambda \cdot t \equiv j(q-1)+i\cdot o_\lambda\cdot l (q-1) \pmod{n\cdot o_\lambda},
$$
 which, since $o_\lambda|(q-1)$ and $(i,n)=1$, is equivalent to 
 $$t \equiv (o_\lambda\cdot l + j\cdot i^{-1}\pmod{n})\frac{q-1}{o_\lambda} \pmod{n} .$$

\end{proof}
\begin{proposition}
	\label{proposition:closure1}
	 Let $\vA \in \F^{n}_{q^{m_2}}$, $\va = \mathcal{F}^{-1}_{\xi,\beta}(\vA)$ and $\xi^t=\beta^{q-1}$. Then $\va\in \F^n_{q}$ if and only if  $ A^q_j = A_{qj+t},\forall j\in \mathbb{Z}_n$. 
\end{proposition}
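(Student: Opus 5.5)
The argument is a direct computation with the Frobenius map, using the uniqueness of $t$ from Lemma~\ref{lemma:existence-of-t}. The basic identity is that $\xi^{qj}\beta^{q}=\xi^{qj}\beta\,\beta^{q-1}=\xi^{qj+t}\beta$. We use it once in each direction.

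For the forward direction, assume $\va\in\F_q^n$, so $a_i^q=a_i$. Frobenius is additive and multiplicative on $\Fc$, so
$$A_j^q=\sum_{i=0}^{n-1}a_i^q(\xi^{j}\beta)^{qi}=\sum_{i=0}^{n-1}a_i(\xi^{qj}\beta^q)^{i}=\sum_{i=0}^{n-1}a_i(\xi^{qj+t}\beta)^i=A_{qj+t}.$$
Since $\xi^n=1$, the index $qj+t$ only matters modulo $n$, so it is well defined in $\mathbb{Z}_n$.

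For the converse, assume $A_j^q=A_{qj+t}$ for all $j$. We show $a_i^q=a_i$ for every $i$ using the inverse formula. Because $\lambda\in\F_q$ and $n\in\F_q$,
$$a_i^q=\frac{1}{n\beta^{qi}}\sum_{j}A_j^q\xi^{-qij}=\frac{1}{n\beta^{qi}}\sum_j A_{qj+t}\,\xi^{-qij}.$$
Since $(q,n)=1$, the map $j\mapsto qj+t$ is a bijection of $\mathbb{Z}_n$. Substitute $k=qj+t$, so that $\xi^{-qij}=\xi^{-i(k-t)}=\xi^{-ik}\xi^{it}$. Then
$$a_i^q=\frac{\xi^{it}}{n\beta^{qi}}\sum_k A_k\xi^{-ik}.$$
Finally, $\xi^{it}/\beta^{qi}=\beta^{i(q-1)}/\beta^{qi}=\beta^{-i}$, so the right-hand side is exactly $a_i$ by the inverse transform formula. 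Hence $a_i\in\F_q$ for all $i$.

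The main obstacle is the bookkeeping in the converse: the affine reindexing $j\mapsto qj+t$ must be a bijection, which needs $(q,n)=1$. It must also produce exactly the factor $\xi^{it}$, and that factor has to cancel against the mismatch between $\beta^{qi}$ and $\beta^{i}$. The cancellation works precisely because $\xi^t=\beta^{q-1}$, the relation supplied by Lemma~\ref{lemma:existence-of-t}.
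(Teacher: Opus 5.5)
Your proof is correct and essentially the paper's: the forward direction is the same Frobenius computation using $\xi^{qj}\beta^{q}=\xi^{qj+t}\beta$. For the converse, the paper applies Frobenius to the forward transform to show that the DFT of $(a_i^q-a_i)$ vanishes at every index $qj+t$, while you apply Frobenius to the inverse formula and reindex; this is the same idea seen from the other side. Your explicit note that $j\mapsto qj+t$ is a bijection of $\mathbb{Z}_n$ because $(q,n)=1$ supplies a step the paper leaves implicit, and it is needed to conclude that all DFT components of $\va^q-\va$ vanish.
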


\begin{proof}
Suppose that	 $\va\in \F_q^n$. Then $a_i^q=a_i,\forall i \in \mathbb{Z}_n$. We have
$$
A_j^q=\left(\sum_{i=0}^{n-1}a_i(\xi^{j} \beta)^i\right)^q=\sum_{i=0}^{n-1}a_i^q(\xi^{jq} \beta^{q})^i=
$$
$$
=\sum_{i=0}^{n-1}a_i(\xi^{jq} \beta^{1+(q-1)})^i=\sum_{i=0}^{n-1}a_i(\xi^{qj+t} \beta)^i=A_{qj+t}.
$$
Conversely, suppose $A^q_j = A_{qj+t}, \forall j \in \mathbb{Z}_n$. Then 
$$0=A^q_j - A_{qj+t}=\sum_{i=0}^{n-1}(a_i^q-a_i)(\xi^{qj+t}\beta)^i, \forall j \in \mathbb{Z}_n$$
Applying the inverse constacyclic DFT then yields $a_i^q=a_i,\forall i\in \mathbb{Z}_n$, which means $\va\in \F^n_q$.
\end{proof}

\begin{proposition}
\label{proposition:closure2}
	Let $K \subset \mathbb{Z}_n$ and  $\xi^t=\beta^{q-1}$. Then  polynomial $g(x)=\prod_{k\in K}(x-\xi^k\beta)$ has coefficients in $\Fa$ and  divides $x^n-\beta^n$   if and only if $k \in K$ implies $qk+t \in K$.
\end{proposition}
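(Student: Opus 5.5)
The divisibility part is immediate and does not depend on the condition on $K$. The roots of $x^n-\beta^n=x^n-\lambda$ are exactly the $n$ distinct elements $\xi^k\beta$, $k\in\mathbb{Z}_n$; they are distinct because $\xi$ is primitive and $(n,q)=1$. Hence $g(x)=\prod_{k\in K}(x-\xi^k\beta)$ always divides $x^n-\beta^n$ over $\Fc$. The content of the statement is therefore the equivalence
\[
g(x)\in\Fa[x] \iff \bigl(k\in K \Rightarrow qk+t\in K\bigr).
\]
I would prove this by comparing the root sets of $g$ and of its Frobenius conjugate.

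The key computation is the image of a root under Frobenius. By Lemma~\ref{lemma:existence-of-t}, $\beta^{q-1}=\xi^t$, so
\[
(\xi^k\beta)^q=\xi^{qk}\beta\,\beta^{q-1}=\xi^{qk+t}\beta .
\]
Thus the Frobenius map $\sigma:x\mapsto x^q$ permutes the set $\{\xi^k\beta:k\in\mathbb{Z}_n\}$ according to the affine map $\phi:k\mapsto qk+t$ on $\mathbb{Z}_n$. This map is a bijection, since $q$ is invertible modulo $n$. Applying $\sigma$ to the coefficients of $g$ gives
\[
g^{\sigma}(x)=\prod_{k\in K}\bigl(x-\xi^{\phi(k)}\beta\bigr).
\]

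($\Leftarrow$) Suppose $\phi(K)\subseteq K$. Since $\phi$ is injective and $K$ is finite, $\phi(K)=K$, so $g^\sigma=g$. Every coefficient of $g$ is then fixed by $\sigma$, and therefore lies in $\Fa$.

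($\Rightarrow$) Suppose $g\in\Fa[x]$. Then $g^\sigma=g$. For $k\in K$, applying $\sigma$ to $g(\xi^k\beta)=0$ gives $g(\xi^{qk+t}\beta)=0$. The roots of $g$ are exactly $\xi^j\beta$ for $j\in K$, and these are distinct, so $qk+t\in K$.

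The only delicate point is ensuring that the affine shift $t$ is well defined and that distinct indices give distinct roots. Both are supplied by Lemma~\ref{lemma:existence-of-t} together with the assumption $(n,q)=1$, so I expect no real obstacle beyond keeping track of this bookkeeping.
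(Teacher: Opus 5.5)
Your proof is correct, and it finishes the argument more directly than the paper does. Both proofs rest on the identity $(\xi^k\beta)^q=\xi^{qk+t}\beta$. They differ in how they get from that identity to a criterion for $g\in\mathbb{F}_q[x]$.

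The paper writes each root $\xi^k\beta$ as a power of a primitive $n\cdot o_\lambda$-th root of unity $\delta$. It then translates closure of $K$ under $k\mapsto qk+t \pmod n$ into closure of the corresponding exponent set under multiplication by $q$ modulo $n\cdot o_\lambda$. Finally it appeals to the classical cyclotomic-coset criterion for $\prod_e(x-\delta^e)$ to have coefficients in $\mathbb{F}_q$. Only the ``if'' direction is spelled out there, and divisibility is left implicit.

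You instead stay in $\mathbb{Z}_n$. You note that Frobenius permutes the roots of $x^n-\lambda$ through the affine bijection $\phi$ and write $g^\sigma$ explicitly. Both directions then follow from $g^\sigma=g \iff g\in\mathbb{F}_q[x]$ together with distinctness of the roots, and you observe that divisibility holds for every $K$.

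Your version is self-contained and avoids the $\delta$-exponent bookkeeping, in particular the check that the shift by $t$ in $\mathbb{Z}_n$ matches multiplication by $q$ on the shifted exponents modulo $n\cdot o_\lambda$. The paper's version buys an explicit identification of the orbits of $k\mapsto qk+t$ with ordinary $q$-cyclotomic cosets modulo $n\cdot o_\lambda$, namely those in the residue class of the exponent of $\beta$. That lets one reuse the standard cyclotomic-coset machinery from cyclic codes.
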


\begin{proof}
	The proof is similar to that of Lemma \ref{lemma:existence-of-t}. Expressing $\xi^k\beta$ as powers of $\delta$:
 $$\xi^k\beta=\delta^{(o_\lambda i)\cdot k+(i\cdot l\cdot o_\lambda+j)}.$$ Observe that $$\xi^{kq+t}\beta = \xi^{kq}\beta^{q}=(\xi^k\beta)^q.$$
Thus, closure of $K$ under the map $k \mapsto kq+t \pmod{n}$ implies closure under the Frobenius map $k \mapsto kq \pmod{n\cdot o_{\beta^n}}$ , which corresponds to the condition for polynomials over $\Fa$ in the ordinary DFT setting. 
\end{proof}

The following proposition is key for relating the complete generating set of the Schur product of two constacyclic codes to the sumset of their complete generating sets.
\begin{proposition}
	\label{proposition:componentwise-conv}
	Let $a_i \overset{\beta_1}{\longleftrightarrow}   A_j $ and $b_i \overset{\beta_2}{\longleftrightarrow}   B_j $. Then
$$
a_i \cdot b_i \quad \overset{\beta_1\cdot \beta_2}{\longleftrightarrow} \quad \frac{1}{n} \sum_{t=0}^{n-1} A_t\cdot B_{j-t}.
$$
\end{proposition}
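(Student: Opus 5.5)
The plan is to verify the claim directly in the frequency domain. We compute the $(\beta_1\beta_2)$-constacyclic DFT of the componentwise product $c_i=a_ib_i$ and show it equals the stated convolution.

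First, note that $\beta_1\beta_2$ is an $n$-th root of $\lambda_1\lambda_2$, where $\beta_k^n=\lambda_k$. Hence $\FF_{\xi,\beta_1\beta_2}$ is a legitimate $\lambda_1\lambda_2$-constacyclic DFT, consistent with the remark that the Schur product of a $\lambda_1$- and a $\lambda_2$-constacyclic code is $\lambda_1\lambda_2$-constacyclic. We let all quantities live in a common extension field containing $\xi,\beta_1,\beta_2$.

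Next, by definition, $C_j=\sum_{i=0}^{n-1}a_ib_i(\xi^j\beta_1\beta_2)^i$. We substitute the inverse transform for one factor only, say
$$a_i=\frac{1}{n\beta_1^i}\sum_{s=0}^{n-1}A_s\xi^{-is}.$$
The factor $\beta_1^{-i}$ then cancels the $\beta_1^i$ in the kernel, giving
$$C_j=\frac1n\sum_{s=0}^{n-1}A_s\sum_{i=0}^{n-1}b_i(\xi^{j-s}\beta_2)^i=\frac1n\sum_{s=0}^{n-1}A_sB_{j-s},$$
with indices taken modulo $n$. This uses the fact that $\xi^{n}=1$, so $B_{j-s}$ is well defined on $\mathbb{Z}_n$. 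Renaming $s$ as $t$ gives the stated formula. By the invertibility of the constacyclic DFT (the inverse formula in the preliminaries, valid since $(n,q)=1$ makes $n$ invertible), this establishes the transform pair.

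The only point requiring care is the bookkeeping of the twist factors. We must substitute the inverse transform for exactly one of $a_i$ or $b_i$; substituting both would produce a double sum with an orthogonality relation $\sum_i\xi^{i(j-s-u)}$, which also works but is messier. We also need to check that the $\beta$'s cancel exactly, which is why the product transform must use $\beta_1\beta_2$ and not some other root of $\lambda_1\lambda_2$. We should also clarify that the summation index $t$ in the statement is a dummy variable, unrelated to the $t$ of Lemma~\ref{lemma:existence-of-t}.
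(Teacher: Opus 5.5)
Your proof is correct and is essentially the paper's argument: a direct computation from the definitions, relying on the same orthogonality $\sum_{t}\xi^{t(r-s)}=n\delta_{r,s}$. The paper expands both $A_t$ and $B_{j-t}$ by the forward transform and collapses the resulting triple sum. You instead substitute the inverse formula for $a_i$ alone, which absorbs that orthogonality into the stated inverse transform, so the inner sum is recognized directly as $B_{j-s}$.
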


\begin{proof}
	Let $\vc = \va \circ \vb$ and $c_i \overset{\beta_1\beta_2}{\longleftrightarrow}   C_j $. We shall prove that $C_j= \frac{1}{n} \sum_{t=0}^{n-1} A_t \cdot B_{j-t}$.
 $$
  \frac{1}{n} \sum_{t=0}^{n-1} A_t \cdot B_{j-t} = \frac{1}{n} \sum_{t=0}^{n-1} \sum_{r=0}^{n-1} \sum_{s=0}^{n-1}a_rb_s \xi^{tr+(j-t)s} \beta_1^{r}\beta_2^s = 
 $$
 \begin{equation}\label{equ:sum-n-or-zero}
     = \frac{1}{n} \sum_{r=0}^{n-1} \sum_{s=0}^{n-1}  a_rb_s \xi^{js} \beta_1^{r}\beta_2^s \left(\sum_{t=0}^{n-1} \xi^{t(r-s)}\right)
 \end{equation}

 Since $ \sum_{t=0}^{n-1} \xi^{t(r-s)} = n\delta_{r,s}$, where $\delta_{r,s}$ is the Kronecker delta, equation (\ref{equ:sum-n-or-zero}) reduces to 
 $$\frac{1}{n}\sum_{i=0}^{n-1}a_ib_i (\xi^j\beta_1\beta_2)^i n= C_j.\qedhere$$
\end{proof}

\begin{proposition} \cite{CCDFT}
\label{DFT:conv-component}
Let $a_i \overset{\beta}{\longleftrightarrow}   A_j $ , $b_i \overset{\beta}{\longleftrightarrow}   B_j $.
If $c(x)=a(x)*b(x)  \pmod {x^n-\beta^n}$, then 
$$
c_i \quad \overset{\beta}{\longleftrightarrow} \quad A_j \cdot B_{j}.
$$
\end{proposition}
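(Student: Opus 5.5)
The plan is to compute the transform of $c$ directly: evaluate $c(x)$ at each point $\xi^j\beta$ and use the fact that these points are exactly the roots of the modulus $x^n-\beta^n$. By definition, $C_j=\sum_{i=0}^{n-1}c_i(\xi^j\beta)^i=c(\xi^j\beta)$, where $c(x)$ is the reduced representative of degree less than $n$. Likewise $A_j=a(\xi^j\beta)$ and $B_j=b(\xi^j\beta)$. So the statement reduces to showing $c(\xi^j\beta)=a(\xi^j\beta)\,b(\xi^j\beta)$ for every $j\in\mathbb{Z}_n$.

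First I will write the congruence $c(x)\equiv a(x)b(x) \pmod{x^n-\beta^n}$ as an identity of polynomials over $\Fc$:
$$a(x)b(x)=c(x)+h(x)\,(x^n-\beta^n)$$
for some $h(x)\in\Fc[x]$. Next I will note that $(\xi^j\beta)^n=\xi^{jn}\beta^n=\beta^n$, since $\xi$ is an $n$-th root of unity. Hence each $\xi^j\beta$ is a root of $x^n-\beta^n$. Evaluating the identity at $x=\xi^j\beta$ kills the second term and gives $A_jB_j=C_j$ for all $j$.

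The only point needing care, rather than a real obstacle, is that $c_i$ and $A_jB_j$ are an actual transform pair. The sequence $(A_jB_j)_j$ must be the image under $\FF_{\xi,\beta}$ of the coefficient vector of the reduced polynomial $c(x)$, not of the unreduced product $a(x)b(x)$. The evaluation argument handles this automatically, because reducing modulo $x^n-\beta^n$ does not change values at its roots.

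Invertibility of the transform then gives the stated correspondence. The inverse formula $a_i=\frac{1}{n\beta^i}\sum_j A_j\xi^{-ij}$ uses $(n,q)=1$ and $\beta\neq0$, and it shows that $\FF_{\xi,\beta}$ is a bijection on length-$n$ vectors. The points $\xi^j\beta$ are also distinct, which is consistent with this.

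Finally, I will remark that the argument does not require $\beta^n=\lambda\in\Fa$. It only uses $(\xi^j\beta)^n=\beta^n$, so the statement holds for any nonzero $\beta$, with the modulus $x^n-\beta^n$ as written.
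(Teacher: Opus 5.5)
Your proof is correct and complete. You write $a(x)b(x)=c(x)+h(x)(x^n-\beta^n)$ and note that $(\xi^j\beta)^n=\beta^n$. Evaluating at $x=\xi^j\beta$ then gives $C_j=c(\xi^j\beta)=a(\xi^j\beta)\,b(\xi^j\beta)=A_jB_j$ for every $j$, which is exactly the claim.

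The paper gives no proof of this proposition; it only cites \cite{CCDFT}, so there is no in-paper argument to compare against. Your evaluation-at-roots argument is the standard one. It is also cleaner than a coefficient-level computation in the style of the paper's proof of Proposition~\ref{proposition:componentwise-conv}, because it never has to track the wrapped-around terms of the product, which pick up the factor $\beta^n$.

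The invertibility paragraph is harmless but unnecessary: the statement only asserts what the forward transform of $c$ is, and you have already computed it directly.
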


To describe the complete generating set of the dual of a constacyclic code, we first need the following reversal property of the constacyclic DFT. 

\begin{proposition}
\label{DFT:reverse}
Let $\va,\vb \in \F_q^n$. Suppose
	\[
	a_i \overset{\beta}{\longleftrightarrow} A_j,
	\qquad
	b_i \overset{\beta^{-1}}{\longleftrightarrow} B_j,
	\qquad
	b_i=a_{n-1-i}, \quad \forall i\in \mathbb{Z}_n.
	\]
	Then
	\[
	B_j=A_{n-j}\cdot \beta^{1-n}\xi^{-j},
	\qquad \forall j\in \mathbb{Z}_n.
	\]
\end{proposition}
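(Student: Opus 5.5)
Let me verify the claimed identity before committing to it. By definition,
\[
B_j=\sum_{i=0}^{n-1} b_i(\xi^j\beta^{-1})^i=\sum_{i=0}^{n-1} a_{n-1-i}\,\xi^{ji}\beta^{-i}.
\]
Reindex with $k=n-1-i$, so that $i=n-1-k$ and $k$ runs over $\{0,\dots,n-1\}$:
\[
B_j=\sum_{k=0}^{n-1} a_k\,\xi^{j(n-1-k)}\beta^{-(n-1-k)}.
\]
Use $\xi^{jn}=1$ to write $\xi^{j(n-1-k)}=\xi^{-j}\xi^{-jk}$. Also $\beta^{-(n-1-k)}=\beta^{1-n}\beta^{k}$. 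Factoring out the terms independent of $k$ gives
\[
B_j=\beta^{1-n}\xi^{-j}\sum_{k=0}^{n-1} a_k\,(\xi^{-j}\beta)^k .
\]
Since $\xi^{-j}=\xi^{n-j}$, the remaining sum is exactly $A_{n-j}$ (read modulo $n$, so $j=0$ gives $A_0$). Hence $B_j=A_{n-j}\,\beta^{1-n}\xi^{-j}$, as claimed.

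So the plan is this direct substitution, and none of the earlier results are needed. There is no real obstacle; the only points needing care are the index bookkeeping. The exponent $n-1-i$ is an honest integer in $[0,n-1]$, so the reindexing is a bijection and no reduction modulo $n$ is involved there. The sign $\beta^{-1}$ in $b$'s transform is essential: it is what turns $\beta^{-(n-1-k)}$ into $\beta^{1-n}\beta^{k}$, which restores the $\beta^k$ weight appearing in $A$.

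We also do not need $\beta^n=\lambda$ or $\xi^t=\beta^{q-1}$. The identity holds for any nonzero $\beta$ and any $n$-th root of unity $\xi$.
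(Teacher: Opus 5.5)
Your proof is correct and is essentially the paper's argument: reindex with $k=n-1-i$, use $\xi^{jn}=1$ to pull out $\beta^{1-n}\xi^{-j}$, and recognize the remaining sum as $A_{n-j}$ via $\xi^{-j}=\xi^{n-j}$. Your remark that only $\xi^n=1$ and $\beta\neq 0$ are used, and not $\beta^n=\lambda$ or the Frobenius shift $t$, is also accurate.
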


\begin{proof}
\begin{align*}
   &B_j = \sum_{i=0}^{n-1}b_i (\xi^j\beta^{-1})^i=
    \quad\text{(with } t = n - 1 - i\text{)} \\
    &=\sum_{t=0}^{n-1}b_{n-1-t} (\xi^j\beta^{-1})^{n-1-t} = \sum_{t=0}^{n-1}a_{t} (\xi^j\beta^{-1})^{n-1-t}=\\
    &= \sum_{t=0}^{n-1}a_{t} (\xi^{-j}\beta^{})^{t} \xi^{(n-1)j}\beta^{1-n}=\sum_{t=0}^{n-1}a_{t} (\xi^{n-j}\beta^{})^{t} \xi^{-j}\beta^{1-n}=\\
    &=A_{n-j} \cdot \beta^{1-n}\cdot \xi^{-j}. 
\end{align*}
\end{proof}
\begin{remark}
   The statement in \cite[Theorem 5]{Wrongarxiv} considers a different reversal operation for vectors, while both transforms are taken with respect to  $\xi, \beta$.

We point out that the last equality in equation (5) in \cite{Wrongarxiv} does not appear to hold. This is because, under this setting, the equality fails when both spectrum vectors are defined with respect to $\xi, \beta$. A consistent formulation instead requires using $\xi, \beta$ for one vector and $\xi, \beta^{-1}$ for the other.
\end{remark}

\begin{corollary}
Let $\C$ be a constacyclic code with complete generating set $\G_{\xi,\beta}$. Then the complete generating set of dual code $\C^\perp$ with respect to $\xi$, $\beta^{-1}$ is given by
$-(\mathbb{Z}_n\setminus \G_{\xi,\beta})$.
\end{corollary}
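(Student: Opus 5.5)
Let $\C=\langle g(x)\rangle$ be $\lambda$-constacyclic with $g(x)\mid x^n-\lambda$ and $h(x)=(x^n-\lambda)/g(x)$. The plan is to write down a generator for $\C^\perp$ explicitly and read off its roots in the form $\xi^k\beta^{-1}$.

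The standard fact is that $\C^\perp$ is $\lambda^{-1}$-constacyclic. It is generated by the normalized reciprocal $h^*(x)=x^{\deg h}h(x^{-1})$, rescaled to be monic. I will take this as known, or justify it briefly: a vector $\va$ is orthogonal to every constacyclic shift of a codeword exactly when the product $a(x)\cdot(\text{reversal of }c)$ vanishes modulo $x^n-\lambda^{-1}$. Since $\lambda^{-1}=(\beta^{-1})^n$, the natural spectral parameters for $\C^\perp$ are indeed $\xi,\beta^{-1}$.

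With that in hand, I compute the complete generating set of $\C^\perp$ with respect to $\xi,\beta^{-1}$, namely the set of $j$ with $h^*(\xi^j\beta^{-1})\neq 0$.
\begin{itemize}
\item The roots of $x^n-\lambda$ are exactly the elements $\xi^k\beta$ for $k\in\mathbb{Z}_n$, all distinct.
\item Since $\G_{\xi,\beta}=\{k: g(\xi^k\beta)\neq 0\}$, the roots of $h$ are precisely $\{\xi^k\beta : k\in \G_{\xi,\beta}\}$.
\item The roots of $h^*$ are the inverses $\xi^{-k}\beta^{-1}$ for $k\in\G_{\xi,\beta}$. These are nonzero, because $h(0)\neq0$ as $h$ divides $x^n-\lambda$ with $\lambda\ne0$.
\item Hence $h^*(\xi^j\beta^{-1})=0$ if and only if $-j\in\G_{\xi,\beta}$, so $h^*(\xi^j\beta^{-1})\neq0$ if and only if $j\in -(\mathbb{Z}_n\setminus\G_{\xi,\beta})$.
\end{itemize}

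Alternatively, in the spirit of the paper, I can phrase this through Proposition~\ref{DFT:reverse}. The codewords of $\C^\perp$ are reversals of elements of the annihilator $\langle h\rangle$. By Proposition~\ref{DFT:conv-component}, the $\xi,\beta$-spectrum of elements of $\langle h\rangle$ vanishes exactly off $\G_{\xi,\beta}$, that is, the spectrum is supported on $\mathbb{Z}_n\setminus\G_{\xi,\beta}$ in the sense of zeros. The reversal formula $B_j=A_{n-j}\beta^{1-n}\xi^{-j}$ then transfers the zero set via $j\mapsto -j$, since the factor $\beta^{1-n}\xi^{-j}$ never vanishes.

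The main obstacle is the bookkeeping in the first step: confirming that the dual really is $\lambda^{-1}$-constacyclic with the reciprocal generator. The off-by-one in the reversal $i\mapsto n-1-i$ versus $i\mapsto -i$ must be consistent with Proposition~\ref{DFT:reverse}. The root-inversion argument avoids these indexing issues, so I would use it as the main proof and the DFT route only as a consistency check.
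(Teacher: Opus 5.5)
Your main argument is correct and is essentially the paper's proof. Both start from $\C^\perp=\langle h^*\rangle$ with $h=(x^n-\lambda)/g$, and your observation that $h^*$ has roots exactly $\xi^{-k}\beta^{-1}$, $k\in\G_{\xi,\beta}$, is the root-level content of Propositions~\ref{DFT:conv-component} and~\ref{DFT:reverse}; your DFT variant is exactly the paper's route.

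There is one slip in that DFT variant. The $\xi,\beta$-spectrum of elements of $\langle h\rangle$ vanishes \emph{on} $\G_{\xi,\beta}$ and is supported on $\mathbb{Z}_n\setminus\G_{\xi,\beta}$, not the other way round. Read literally, your zero-set transfer would give $-\G_{\xi,\beta}$ instead of $-(\mathbb{Z}_n\setminus\G_{\xi,\beta})$.
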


\begin{proof}
    It follows from that the dual code of the $\lambda-$constacyclic code $\langle g\rangle$ is generated by $(\frac{x^n-\lambda}{g})^*$, where $*$ denotes the reciprocal polynomial.  Given the above generator, this conclusion is an immediate consequence of Propositions \ref{DFT:conv-component} and \ref{DFT:reverse}.
\end{proof}

\begin{proposition}
	\label{proposition:xv}
Let $\C$ be a constacyclic code of length $n$ with complete generating set $\G$. Then $\G$ is a union of cosets of a nontrivial subgroup in $\mathbb{Z}_n$ if and only if $\C$ is generated by a polynomial $g(x)$ that can be expressed as  $g(x)=u(x^v)$, where $v\mid n$ and $1< v < n$.	
\end{proposition}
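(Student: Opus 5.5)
The plan is to work with the roots of $g$ directly. By definition of $\G=\G_{\xi,\beta}$, the roots of $x^n-\lambda$ are exactly the $n$ distinct elements $\xi^k\beta$, $k\in\mathbb{Z}_n$. Hence
\[
g(x)=\prod_{k\in \mathbb{Z}_n\setminus \G}(x-\xi^k\beta).
\]
So $\G$ is a union of cosets of a subgroup $H$ if and only if its complement $K=\mathbb{Z}_n\setminus\G$ is. The nontrivial subgroups of $\mathbb{Z}_n$ are $H_w=\langle n/w\rangle$ of order $w$, for $w\mid n$ and $w>1$. I will show that $K$ is a union of $H_w$-cosets exactly when $g(x)=u(x^w)$, and then set $v=w$.

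For ($\Leftarrow$), suppose $g(x)=u(x^v)$ with $1<v<n$ and $v\mid n$. Let $\omega=\xi^{n/v}$, which is a primitive $v$-th root of unity. If $g(\xi^k\beta)=0$, then $g(\omega\xi^k\beta)=u(\omega^v(\xi^k\beta)^v)=u((\xi^k\beta)^v)=0$. So $K$ is invariant under $k\mapsto k+n/v$. Since $g$ divides $x^n-\lambda$, which is separable, its roots are exactly the $\xi^k\beta$ with $k\in K$. Therefore $K$, and hence $\G$, is a union of cosets of $\langle n/v\rangle$, which is nontrivial because $v>1$.

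For ($\Rightarrow$), suppose $\G$, and hence $K$, is a union of cosets of $H=\langle n/v\rangle$ with $v>1$. Each coset $k+H$ gives the factor
\[
\prod_{s=0}^{v-1}(x-\omega^s\xi^k\beta)=x^v-(\xi^k\beta)^v,
\]
using $\prod_s(x-\omega^s\gamma)=x^v-\gamma^v$. Multiplying over the cosets in $K$ gives $g(x)=u(x^v)$ with $u(y)=\prod(y-(\xi^k\beta)^v)$, the product taken over coset representatives. The coefficients of $u$ lie in $\Fa$ because they are a subset of the coefficients of $g$, which lies in $\Fa[x]$.

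The main obstacle is the range condition $v<n$. If $v=n$ then $H=\mathbb{Z}_n$, and $K$ is either empty ($g=1$) or everything ($g=x^n-\lambda$, the zero code). In the first case $g=u(x^v)$ holds for every $v$, so any proper divisor of $n$ greater than $1$ works. In the second case I would need to allow a proper $v$, or else note that $x^n-\lambda=(x^{n/v'})^{v'}-\lambda$ for any proper divisor; $n$ prime is the edge case. I would treat these degenerate codes separately, or interpret the statement as excluding them. One more subtlety is that for a nonzero code the subgroup may be taken proper, since $\G\ne\emptyset$ and $\G\ne\mathbb{Z}_n$ force $H\ne\mathbb{Z}_n$. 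The same convention must then be applied in both directions.
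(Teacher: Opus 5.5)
Your proof is correct and follows essentially the same route as the paper. For ($\Rightarrow$) you factor $g$ over the cosets of $\langle n/v\rangle$ in the zero set using $\prod_{s}(x-\omega^s\gamma)=x^v-\gamma^v$, and for ($\Leftarrow$) you show the zero set is invariant under $k\mapsto k+n/v$. Your extra remarks fill small points the paper passes over silently: $u$ has coefficients in $\Fa$ because they are coefficients of $g$, and the case $H=\mathbb{Z}_n$ (possible only when $\G\in\{\emptyset,\mathbb{Z}_n\}$, and genuinely false for $n$ prime) must be excluded by reading ``nontrivial'' as ``proper and nonzero''.
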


\begin{proof}
 Let $\D=\mathbb{Z}_n \setminus \G$.\\
($\Rightarrow$) Suppose $\D =\cup_{i=1}^{s}D_i =\cup_{i=1}^{s}a_i + \langle \frac{n}{v}\rangle$.   Then $g(x)=\prod_{i=1}^s g_i(x)$ with $g_i(x)=\prod_{j\in D_i} (x-\xi^j\beta)$.
Let $y = \frac{x}{\xi^{a_i}\beta}$. Then 
$g_i(x)=(\xi^{a_i}\beta)^v \prod_{j=0}^{v-1}(y-\xi^{j\cdot \frac{n}{v}}).$
Since $\xi^\frac{n}{v}$ is a primitive $v$-th  root of unity, $\prod_{j=0}^{v-1}(y-\xi^{j\cdot \frac{n}{v}})=y^v-1$. Thus 
$$g_i(x)=x^v-(\xi^{a_i}\beta)^v=u_i(x^v).$$
Hence, 
$$g(x)=\prod_{i=1}^s g_i(x)=\prod_{i=1}^s u_i(x^v)=u(x^v).$$

($\Leftarrow$) Suppose $g(x)=u(x^v)$.  If  $g(\xi^i \beta)=0$, then $u(\xi^{iv} \beta^v)=0$.  
Since $\xi^{iv} = \xi^{(i+n/v)v}$, we have
$$g(\xi^{i+n/v} \beta) = u(\xi^{(i+n/v)v} \beta^v) = u(\xi^{iv} \beta^v)=0.$$ 
Thus, $\mathcal{D}$ is closed under addition by $\frac{n}{v}$, meaning $\G$ is a union of cosets of $\langle \frac{n}{v} \rangle$.
\end{proof}

\section{Properties of Constacyclic Codes Under the Schur Product}
\subsection{Degenerate constacyclic codes}

In \cite{MacWilliams}, a cyclic code $\C=\langle g\rangle$ is called degenerate if $g(x)$ is divisible by a polynomial of the form  $\sum_{i=0}^{\frac{n}{v}-1}x^{i\cdot v}$ .

We generalize this concept to constacyclic codes, which helps determine whether powers of a constacyclic code can generate the full space $\F^n$.

In \cite{Falk} pattern polynomial $p(x)$ of the generator polynomial $g(x)$ for a  constacyclic code $\C$ of length $n$ is defined as a polynomial with minimal $v$ dividing $n$ and $\alpha\in \F_q^*$, such that $p(x)=\sum_{i=0}^{\frac{n}{v}-1}\alpha^i x^{i\cdot v}$  divides $ g(x)$. 
\begin{definition}
Constacyclic code $\C=\langle g(x) \rangle$ of length $n$ with nontrivial pattern polynomial $p(x)=\sum_{i=0}^{\frac{n}{v}-1}\alpha^ix^{iv} \neq 1 $, where $p(x)|g(x)$, $v\mid n$, $v<n$ is called degenerate. If the pattern polynomial is trivial, i.e., $p(x)=1$, the code is non-degenerate.
\end{definition}

\begin{proposition} \cite{Falk}
	For constacyclic code $\C$ over finite field $\F$, there exist an integer  $i>1$ such that $\C^{\circ i} = \F^n$ if and only if $\C$ is non-degenerate.
\end{proposition}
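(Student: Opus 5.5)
We plan to work entirely on the spectral side, using the complete generating set. Fix $\xi$ and an $n$-th root $\beta$ of $\lambda$, and let $\G=\G_{\xi,\beta}$ be the complete generating set of $\C=\langle g\rangle$. The first step is to show that the complete generating set of $\C^{\circ i}$, with respect to $\xi,\beta^i$, is exactly the iterated sumset $i\G$. Proposition \ref{proposition:componentwise-conv} gives one inclusion: the spectrum of a Schur product of $i$ codewords is an $i$-fold convolution of their spectra, so it is supported in $i\G$. For the reverse inclusion we argue by dimension. The code $\C$ is the set of $\va$ with $A_j=0$ for $j\notin\G$. Using Propositions \ref{proposition:closure1} and \ref{proposition:closure2}, $\C\otimes\Fc$ is spanned by the ``spectral delta'' vectors $e_k$ (with $E_j=\delta_{j,k}$) for $k\in\G$. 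By Proposition \ref{proposition:componentwise-conv}, $e_{k_1}\circ\cdots\circ e_{k_i}$ is a nonzero multiple of the spectral delta at $k_1+\cdots+k_i$. Hence $(\C\otimes\Fc)^{\circ i}$ has spectral support exactly $i\G$. Since the Schur product commutes with field extension, $\dim\C^{\circ i}=|i\G|$.

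Consequently $\C^{\circ i}=\F^n$ for some $i$ if and only if $i\G=\mathbb{Z}_n$ for some $i$. By standard additive combinatorics in $\mathbb{Z}_n$, the sets $i\G$ eventually fill the coset $i\,\SC(\G)=\SC(i\G)$, since $\SC(A+B)=\SC(A)+\SC(B)$. So $i\G=\mathbb{Z}_n$ for large $i$ exactly when $\langle\G-\G\rangle=\mathbb{Z}_n$, i.e.\ when $\G$ is not contained in a single coset of a proper subgroup $H=\langle n/v\rangle$ with $1<v$. The eventual-filling claim holds because, once $0\in\G-g_0$, the shifted sets $i(\G-g_0)$ are nested and increase up to the generated subgroup.

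It remains to match ``$\G$ lies in a single coset $a+\langle v\rangle$ of a proper subgroup'' with degeneracy, i.e.\ with $p(x)=\sum_{i=0}^{n/v-1}\alpha^i x^{iv}$ dividing $g$ for some $v\mid n$, $v<n$, $\alpha\in\Fa^*$. The roots of $p$ are the $y$ with $y^v=\gamma$, where $\gamma^{n/v}=\alpha^{-n/v}$ but $\gamma\neq\alpha^{-1}$. Writing $y=\xi^k\beta$, these roots form the complement in $\mathbb{Z}_n$ of one coset of the subgroup $\langle v\rangle$ of order $n/v$, namely the coset where $(\xi^k\beta)^v=\alpha^{-1}$. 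Thus $p\mid g$ is equivalent to $\D=\mathbb{Z}_n\setminus\G$ containing the complement of a coset $a+\langle v\rangle$, i.e.\ $\G\subseteq a+\langle v\rangle$. This mirrors the computation in the proof of Proposition \ref{proposition:xv}.

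The main obstacle is the converse direction: given $\G\subseteq a+\langle v\rangle$, we must produce $\alpha$ in $\Fa^*$ rather than just in $\Fc$. Here we would set $\alpha^{-1}=(\xi^a\beta)^v$ and use Proposition \ref{proposition:closure2}. Since $\G$ is closed under $k\mapsto qk+t$ and is nonempty, the coset $a+\langle v\rangle$ is mapped into itself. This gives $(\xi^a\beta)^{vq}=(\xi^{qa+t}\beta)^v=(\xi^a\beta)^v$, so $\alpha\in\Fa$. Two degenerate edge cases also need checking: $\G=\emptyset$ corresponds to $g=x^n-\lambda$, which is degenerate with $v=1$, and the case $\langle v\rangle$ trivial corresponds to $|\G|=1$.
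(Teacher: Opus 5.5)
The paper does not prove this proposition: it is quoted from \cite{Falk} and then used, for instance to dispose of the non-degenerate case when computing the pattern polynomial of $\C_1\circ\C_2$. Your argument is therefore an independent, self-contained spectral proof built from the paper's own tools.

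\begin{itemize}
\item Step 1 is Theorem~\ref{theorem:sumset} iterated, and you could cite it directly. The spanning of $\Fc\otimes\C$ by the spectral deltas $e_k$, $k\in\G$, comes from Lemma~\ref{lemma:image-of-dft} (or a dimension count), not from Propositions~\ref{proposition:closure1} and~\ref{proposition:closure2}.
\item The nested-sumset step correctly reduces fullness to $\langle\G-\G\rangle=\mathbb{Z}_n$.
\item The root computation is the one in Propositions~\ref{proposition:xv} and~\ref{pro:smallest-coset}.
\end{itemize}

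What your route adds is the Frobenius argument that produces $\alpha\in\Fa^*$ from a coset containing $\G$. The minimality step in the proof of Proposition~\ref{pro:smallest-coset} tacitly assumes that every coset containing $\G$ is the spectral support of some pattern polynomial over $\Fa$ dividing $g$. Your argument is exactly what justifies this, and it also gives $\dim\C^{\circ i}=|\SC(\G)|$ for all large $i$.

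However, your last two paragraphs systematically confuse $v$ with $n/v$, and as written the key computation fails. Since $\deg p=n-v$, the excluded set $\{k:\alpha(\xi^k\beta)^v=1\}$ has $v$ elements. It is a coset of $\ker(k\mapsto\xi^{kv})=\langle n/v\rangle$, which has order $v$, not a coset of $\langle v\rangle$.

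\begin{itemize}
\item Your identity $(\xi^{qa+t}\beta)^v=(\xi^a\beta)^v$ requires $(qa+t-a)v\equiv 0\pmod n$, i.e.\ $qa+t-a\in\langle n/v\rangle$. It is false in general for $\langle v\rangle$.
\item Read literally, ``$p\mid g\iff\G\subseteq a+\langle v\rangle$'' would make every code degenerate via $v=1$.
\item In Step 2, properness of $\langle n/v\rangle$ means $v<n$, not $1<v$. The trivial subgroup ($v=1$) must be allowed, because it is what makes a singleton $\G$ degenerate. For prime $n$ your condition is vacuous and would let a code with $|\G|=1$ fill $\F^n$.
\end{itemize}

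With $H=\langle n/v\rangle$, $v\mid n$, $v<n$ used throughout, both directions go through as you sketch them.

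Finally, the case $\G=\emptyset$ is mishandled. The polynomial $\sum_{i=0}^{n-1}\alpha^i x^i$ divides $x^n-\lambda$ only if $\alpha^{-n}=\lambda$ for some $\alpha\in\Fa^*$. More generally, the zero code is degenerate if and only if $\lambda=\alpha^{-n/v}$ for some $\alpha\in\Fa^*$ and some divisor $v<n$ of $n$.

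For example, over $\F_3$ with $n=2$ and $\lambda=-1$, the zero code $\langle x^2+1\rangle$ is non-degenerate, yet all its powers are $\{0\}$. So the proposition must be read for $\C\neq\{0\}$. Your proof, which needs some $a\in\G$, should exclude that case rather than claim it is degenerate.
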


\begin{proposition}
\label{pro:smallest-coset}
Let $\C=\langle g(x)\rangle$ be a $\lambda$-constacyclic code with pattern polynomial $p(x)$. 
The support $\operatorname{supp}(\mathcal{F}_{\xi,\beta}(p(x)))$ of the DFT of $p(x)$ is the smallest coset of a subgroup of $\mathbb{Z}_n$ that contains the complete generating set $\mathcal{G}$ of the code, i.e.,
$$\SC(g(x))=\SS(p(x)).$$ 
\end{proposition}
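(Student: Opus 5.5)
The plan is to compute $\SS(p(x))$ explicitly, show it is a coset containing $\G$, and then use the minimality of $v$ in the definition of the pattern polynomial to show no smaller coset works.

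\textbf{Step 1: the spectrum of a pattern polynomial.} Take $p(x)=\sum_{i=0}^{n/v-1}\alpha^i x^{iv}$ with $v\mid n$. Set $y=\alpha x^v$. Then
\[
p(x)=\frac{y^{n/v}-1}{y-1}\quad\text{whenever } y\neq 1 .
\]
Evaluate at $x=\xi^j\beta$, so that $y_j=\alpha\xi^{jv}\beta^v$.
\begin{itemize}
\item Since $p(x)\mid g(x)\mid x^n-\lambda$, we get $p(\xi^j\beta)\in\{0\}\cup\{n/v\}$. The value is $n/v\neq 0$ (because $(n,q)=1$) exactly when $y_j=1$, and it is $0$ otherwise.
\item Hence $\SS(p)=\{j:\xi^{jv}=(\alpha\beta^v)^{-1}\}$.
\item $y_j^{n/v}=\alpha^{n/v}\lambda=1$, and $p$ divides $x^n-\lambda$, which has distinct roots. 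Together these should force this set to be nonempty when $p\ne 1$.
\item Since $\xi^v$ has order $n/v$, the solution set of $\xi^{jv}=c$ is a single coset $a+\langle n/v\rangle$.
\end{itemize}
So $\SS(p)$ is a coset of $\langle n/v\rangle$, of size $v$. When $p=1$ (that is, $v=n$), $\SS(p)=\mathbb{Z}_n$, which is trivially a coset.

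\textbf{Step 2: containment.} Since $p\mid g$, every root of $p$ is a root of $g$. So if $j\in\G$, meaning $g(\xi^j\beta)\neq0$, then $p(\xi^j\beta)\ne0$ and $j\in\SS(p)$. Hence $\G\subseteq\SS(p)$, and therefore $\SC(\G)\subseteq\SS(p)$.

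\textbf{Step 3: minimality (main obstacle).} Write $\SC(\G)=a+H$ with $H=\langle n/w\rangle$. By Step 2, $H\le\langle n/v\rangle$, so $v\mid w$ and $w\le v$.

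\begin{itemize}
\item All zeros of $g$ outside $a+H$ are roots of $g$. Take the complement of the coset $a+H$ inside $\mathbb{Z}_n$; it is a union of $H$-cosets.
\item By the argument in the proof of Proposition~\ref{proposition:xv}, the product of $(x-\xi^k\beta)$ over a single coset $c+H$ equals $x^w-(\xi^c\beta)^w$.
\item Taking the product over the complement, the same geometric-sum computation as in Step 1 (reversed) gives
\[
\prod_{k\notin a+H}(x-\xi^k\beta)=\frac{x^n-\lambda}{x^w-(\xi^a\beta)^w}=\sum_{i=0}^{n/w-1}\alpha'^{\,i}x^{iw},\qquad \alpha'=(\xi^a\beta)^{-w}.
\]
This polynomial divides $g$, because all its roots are roots of $g$, which is squarefree.
\end{itemize}

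The delicate point is that $\alpha'$ must lie in $\F_q^*$. Equivalently, the coset $a+H$ must be closed under the affine Frobenius map $j\mapsto qj+t$ from Proposition~\ref{proposition:closure2}. I will argue this as follows:
\begin{itemize}
\item $\G$ is closed under $j\mapsto qj+t$, since $g\in\F_q[x]$.
\item This affine map sends cosets of $H$ to cosets of $H$, because $q$ is a unit mod $n$.
\item The smallest coset containing $\G$ is unique, so its image must equal itself. Hence $a+H$ is closed under the map.
\item Then $(\xi^a\beta)^{wq}=(\xi^{qa+t}\beta)^w=(\xi^a\beta)^w$, so $\alpha'\in\F_q^*$.
\end{itemize}

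We have now produced a pattern polynomial with parameter $w$ dividing $g$. Minimality of $v$ gives $v\le w$, so $w=v$. Then $\SC(\G)$ and $\SS(p)$ are cosets of the same size with one contained in the other, which gives $\SC(g)=\SS(p)$.
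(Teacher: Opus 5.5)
Your proof is correct and follows the paper's route: compute $\SS(p)$ as a coset of $\langle n/v\rangle$, get $\G\subseteq\SS(p)$ from $p\mid g$, and conclude by minimality of $v$. The one difference is that your Step 3 actually proves the minimality step, which the paper only asserts in one line: you build a pattern polynomial from $\SC(\G)=a+H$ and check that $\alpha'\in\F_q^*$ via closure of $a+H$ under $j\mapsto qj+t$. Two harmless slips: $H\le\langle n/v\rangle$ gives $w\mid v$ rather than $v\mid w$, and $(x^n-\lambda)/(x^w-(\xi^a\beta)^w)$ equals $\sum_i(\alpha')^{i}x^{iw}$ only up to the nonzero scalar $(\xi^a\beta)^{n-w}$.
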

\begin{proof}
Let $g(x)=p(x) \cdot u(x) \pmod{x^n-\lambda}$, where $p(x)=\sum_{i=0}^{\frac{n}{v}-1}\alpha^ix^{vi}$. $p(\xi^k \beta) \neq 0$ if and only if $\alpha(\xi^k \beta)^v -1= 0$, i.e., $\xi^{kv} = \alpha^{-1} \beta^{-v}$. This condition defines a coset of the subgroup $\langle n/v \rangle$ in $\mathbb{Z}_n$. By Proposition \ref{DFT:conv-component}, 
$$\FF(g(x))=\FF(p(x))\circ \FF(u(x)).$$
Consequently, $\G= \supp(\FF(g(x)))\subseteq \supp(\FF(p(x))).$

The minimality of $v$ in the definition of the pattern polynomial implies that the subgroup $\langle n/v \rangle$ (and hence its coset) is the smallest possible with this property.
\end{proof}

To find the pattern polynomial of a constacyclic code $\langle g(x)\rangle$, we need to determine $v$ and $\alpha$. As shown in \cite{Falk}, if $v$ is known, $\alpha$ can be obtained by dividing the coefficient $g_v$ by $g_0$ (provided $g_0 \neq 0$). In practice, for a given complete generating set $\G \subset \mathbb{Z}_n$, we can find the minimal coset containing $\G$ by translating $\G$ to contain $0$ and then computing $d = \gcd(\G',n)$, where $\G' = \G - \min(\G)$. Then $v = n/d$. This method is summarized in the following algorithm.
\begin{algorithm}
\label{alg:pattern-poly}
\caption{Computation of the pattern polynomial of a constacyclic code} 
\label{alg:123} 
\KwIn{$\mathbf{g}\in \Fa^n$, $\xi, \beta$.} 
\KwOut{$v$, $\alpha$.} 
$A \gets \supp(\FF_{\xi,\beta}(\mathbf{g}))$\; $t \gets \gcd(A-\min A,n)$\;
\If{$t = 1$}{ 
    \Return{$n,1$}\;
}
\Else{ 
    \Return{$\frac{n}{t}, g_{v}\cdot g_0^{-1}$}\;
}
\end{algorithm}

A degenerate constacyclic code $\C_1$ of length $n$ with pattern polynomial $p(x) \neq 1$ is in one-to-one correspondence with a non-degenerate constacyclic code $\C_2$ of length $v$ via the mapping $\C_1 \to \C_2: c(x) \mapsto \frac{c(x)}{p(x)} $. We denote this non-degenerate code code by $\overline{\C}$ and call it the core of $\C_1$.

In \cite{Falk}, it is shown that for constacyclic code $\C$ with nontrivial pattern polynomial, the generator polynomial of power $\C^{\circ i\ge r(\C)}$  is given by  $(p(x))^{\circ i}$. This follows from the one-to-one correspondence between $\C^{\circ i}$ and $(\overline{\C})^{\circ i}$.

\begin{theorem}
	For a constacyclic code $\C$ with nontrivial pattern polynomial $p(x)$, the generator polynomial of $\C^{\circ i}$ is given by $\overline{g}_i(x)(p(x))^{\circ i}$, where $\overline{g}_i(x)$ is the generator polynomial of $(\overline{\C})^{\circ i}$.
\end{theorem}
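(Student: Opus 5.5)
The plan is to work entirely on the spectral side, using the complete generating sets, and then convert back to generator polynomials.

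\textbf{Step 1 (locating the spectrum).} Write $p(x)=\sum_{j=0}^{n/v-1}\alpha^j x^{jv}$. By Proposition~\ref{pro:smallest-coset}, $\G\subseteq S:=\SS_{\xi,\beta}(p)=a+\langle n/v\rangle$, and $S$ is the smallest coset containing $\G$.

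\textbf{Step 2 (generating set of the power).} I will show that the complete generating set of $\C^{\circ i}$, taken with respect to $\xi,\beta^i$, is the sumset $i\G$. Proposition~\ref{proposition:componentwise-conv} gives $\SS_{\xi,\beta^i}(\vc_1\circ\cdots\circ\vc_i)\subseteq \SS(\vc_1)+\cdots+\SS(\vc_i)$, which yields the inclusion $\subseteq$.

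For the reverse inclusion, fix $s=g_1+\cdots+g_i$ with $g_k\in\G$. I would show that the linear functional ``$s$-th spectral coefficient'' does not vanish identically on $\C^{\circ i}$. To do this, extend scalars to $\Fc$. Over $\Fc$, the code $\C\otimes\Fc$ contains, for each $g\in\G$, a vector whose spectrum is supported on $\{g\}$ alone. By Proposition~\ref{proposition:componentwise-conv}, the product of such vectors has spectrum supported exactly on $\{s\}$, with a nonzero value there. Since extension of scalars commutes with taking Schur products and spans, this settles the claim. Consequently
\[
\G(\C^{\circ i})=i\G\subseteq iS=ia+\langle n/v\rangle .
\]
I also check that $iS=\SS_{\xi,\beta^i}(p^{\circ i})$. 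This holds because $p^{\circ i}=\sum_j\alpha^{ij}x^{jv}$ vanishes at $\xi^k\beta^i$ exactly when $(\alpha\xi^{kv}\beta^{v})^i\neq 1$, which again cuts out a coset of $\langle n/v\rangle$.

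\textbf{Step 3 (factorisation of the generator).} The generator of $\C^{\circ i}$ is
\[
\prod_{k\notin i\G}(x-\xi^k\beta^i)=\Big(\prod_{k\notin iS}(x-\xi^k\beta^i)\Big)\Big(\prod_{k\in iS\setminus i\G}(x-\xi^k\beta^i)\Big).
\]
The first factor has degree $n-v$, and its zero set is exactly the spectral zero set of $p^{\circ i}$. Since $p^{\circ i}$ also has degree $n-v$, the first factor equals $p^{\circ i}$ up to a nonzero scalar, which is normalised away by monicity.

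For the second factor, I would use the correspondence with the core. Identify $S$ with $\mathbb{Z}_v$ via $a+k\frac{n}{v}\mapsto k$. The points $\xi^{a+kn/v}\beta$ are $\gamma\zeta^k$, where $\zeta=\xi^{n/v}$ is a primitive $v$-th root of unity and $\gamma=\xi^a\beta$. Thus $\overline{\C}$ is a $\gamma^v$-constacyclic code of length $v$, and its complete generating set with respect to $\zeta,\gamma$ is the image $\overline{\G}\subseteq\mathbb{Z}_v$ of $\G$. By Step~2 applied to $\overline{\C}$, $(\overline{\C})^{\circ i}$ has generating set $i\overline{\G}$ with respect to $\zeta,\gamma^i$. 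This set is precisely the image of $i\G\subseteq iS$ under the analogous identification $ia+k\frac{n}{v}\mapsto k$, because $\xi^{ia+kn/v}\beta^i=\gamma^i\zeta^k$. Hence the second factor is $\prod_{k\in\mathbb{Z}_v\setminus i\overline{\G}}(x-\gamma^i\zeta^k)=\overline{g}_i(x)$, and the product formula follows.

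\textbf{Expected obstacle.} The main difficulty is the reverse inclusion in Step~2. Showing that every element of $i\G$ genuinely survives in the span requires working over the splitting field, and one must check carefully that spans and Schur products commute with extension of scalars. A secondary point is bookkeeping: the core's parameters $\zeta,\gamma$ must be consistent across all powers $i$, which is exactly why the offsets $ia$ line up in Step~3.
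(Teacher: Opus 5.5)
Your proof is correct, but it takes a different route from the paper's.

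\textbf{Comparison of routes.} The paper argues entirely in the time domain. Since $p$ divides every codeword, each codeword of $\C$ is a block repetition $(\vc\mid\alpha\vc\mid\cdots\mid\alpha^{n/v-1}\vc)$ with $\vc\in\overline{\C}$. Schur products of such vectors are again block repetitions, now with ratio $\alpha^i$ and with the Schur product of the cores inside. So $\bar c\mapsto p^{\circ i}\bar c$ is a linear bijection $(\overline{\C})^{\circ i}\to\C^{\circ i}$, and the generator $\overline g_i\,p^{\circ i}$ can be read off immediately. You instead work spectrally:
\begin{itemize}
\item you re-derive the sumset description $\G(\C^{\circ i})=i\G$ (the paper's Theorem~\ref{theorem:sumset});
\item you identify the spectral data of the core: $\zeta=\xi^{n/v}$, $\gamma=\xi^a\beta$ with $\gamma^v=\alpha^{-1}$, and $\overline{\G}$ the pull-back of $\G$ to $\mathbb{Z}_n/\!\sim\,\cong\mathbb{Z}_v$;
\item you apply the sumset description a second time to $\overline{\C}$;
\item you split the spectral zero set as $(\mathbb{Z}_n\setminus iS)\sqcup(iS\setminus i\G)$.
\end{itemize}
The paper's argument is shorter. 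It needs neither the splitting field nor Lemma~\ref{lemma:image-of-dft}, and it yields a codeword-level correspondence, which is stronger than the generator identity. Yours uses heavier machinery but makes the root structure explicit: $p^{\circ i}$ accounts exactly for the spectral zeros off the coset $ia+\langle n/v\rangle$, and the roots of $\overline g_i$ are $\gamma^i\zeta^m$ for $m\notin i\overline{\G}$. It also shows concretely why the offsets $ia$ line up.

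\textbf{A slip in Step~2.} The value of $p^{\circ i}$ at $\xi^k\beta^i$ is $\sum_j w^j$ with $w=\alpha^i\xi^{kv}\beta^{iv}$. So $p^{\circ i}$ vanishes there exactly when $\alpha^i\xi^{kv}\beta^{iv}\neq 1$, not when $(\alpha\xi^{kv}\beta^v)^i\neq 1$. Your expression is evaluation at $(\xi^k\beta)^i$. Since $\alpha\beta^v=\xi^{-av}$, it would give the nonvanishing set $\{k:\tfrac{n}{v}\mid i(k-a)\}$ rather than $ia+\langle n/v\rangle$. With the correct $w$ you get $w=\xi^{(k-ia)v}$, hence exactly $iS$, as you need.

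This step also relies on two facts you should state. First, $w^{n/v}=(\alpha^{n/v}\lambda)^i=1$; this follows from $p\mid g\mid x^n-\lambda$ together with $p(x)(\alpha x^v-1)=\alpha^{n/v}x^n-1$. Second, $n/v\neq 0$ in $\Fa$, so the sum $n/v$ at $w=1$ is nonzero.

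Finally, $p^{\circ i}$ has leading coefficient $\alpha^{i(n/v-1)}$, so the stated identity holds only up to that scalar, as you already acknowledge.
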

\begin{proof}
	Let $p(x)=\sum_{i=0}^{\frac{n}{v}-1}\alpha^i x^{iv}$. Then any codeword in $\C$ can be written as 
 \[
 (\vc \mid \alpha \vc \mid \dots \mid \alpha^{\frac{n}{v}-1}\vc),
 \]where $ \vc \in  \overline{\C}$. Consequently, any codeword in  $\C^{\circ i}$ has the form  
 \[
 (\vc \mid \alpha^i \vc \mid \dots \mid \alpha^{i(\frac{n}{v}-1)}\vc),
 \]where $ \vc \in  (\overline{\C})^{\circ i}$. Thus,  $\C^{\circ i}$ is generated by $\overline{g}_i(x)(p(x))^{\circ i}$.
\end{proof}

\subsection{Schur Product of Constacyclic Codes via constacyclic DFT}
In \cite[Theorem 1]{firstsumset}, it is shown that for a cyclic code \(\C\), the complete defining set of its Schur square \(\C^{\circ 2}\) is the 2-fold iterated sumset of the complete defining set of \(\C\). Our goal is to generalize this viewpoint from the ordinary DFT to the constacyclic DFT.

To do so, we follow the approach of \cite{firstsumset} and extend constacyclic codes from the base field \(\mathbb{F}_q\) to the splitting field \(\mathbb{F}_{q^{m_2}}\) before applying the constacyclic DFT. The reason is that codewords in the time domain are vectors over \(\mathbb{F}_q\), whereas their constacyclic DFTs lie in \(\mathbb{F}_{q^{m_2}}\). Passing to the larger field allows convolution arguments to be carried out directly in the frequency domain without changing the underlying code structure.

Extending a code \(\mathcal{C}\subseteq \mathbb{F}_q^n\) by scalars from \(\mathbb{F}_{q^{m_2}}\) yields the code $\mathbb{F}_{q^{m_2}}\otimes \mathcal{C}$, consisting of all \(\mathbb{F}_{q^{m_2}}\)-linear combinations of codewords in \(\mathcal{C}\).

For a subset \(S\subseteq \mathbb{Z}_n\), let \(V_S\) denote the subspace of \(\mathbb{F}_{q^{m_2}}^n\) consisting of all vectors whose support is contained in \(S\), namely
\[
V_S=\{\mathbf{c}\in \mathbb{F}_{q^{m_2}}^n:c_i=0\text{ for all }i\notin S\}.
\]
\begin{lemma}\label{lemma:image-of-dft}
	For a constacyclic code $\C$ with complete generating set $\G$, the image of its extension $\FF(\Fc \otimes \C)$  under the constacyclic DFT is $V_\G$.
\end{lemma}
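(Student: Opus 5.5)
We argue by a two-sided inclusion combined with a dimension count over $\Fc$. Throughout, $\C=\langle g(x)\rangle$ with $g\mid x^n-\lambda$, and $\G=\{j:g(\xi^j\beta)\neq 0\}$. The extended code $\Fc\otimes\C$ is the ideal generated by $g(x)$ in $\Fc[x]/\langle x^n-\lambda\rangle$. This holds because any $\Fq$-basis $\{x^i g(x)\}_{0\le i<n-\deg g}$ of $\C$ is also an $\Fc$-basis of the extension. Hence $\dim_{\Fc}(\Fc\otimes\C)=n-\deg g$.

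\textbf{First inclusion.} Take any codeword $c(x)=u(x)g(x)\bmod (x^n-\lambda)$ with $u\in\Fc[x]$. Since $(\xi^j\beta)^n=\lambda$, we can evaluate at the roots and get $C_j=c(\xi^j\beta)=u(\xi^j\beta)\,g(\xi^j\beta)$. This is just Proposition~\ref{DFT:conv-component} read componentwise. If $j\notin\G$, then $g(\xi^j\beta)=0$, so $C_j=0$. Therefore $\FF(\Fc\otimes\C)\subseteq V_\G$.

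\textbf{Dimensions.} The constacyclic DFT is an $\Fc$-linear bijection of $\Fc^n$, since the inverse formula exists ($n$ is invertible because $(n,q)=1$). It follows that $\dim\FF(\Fc\otimes\C)=n-\deg g$. The roots $\xi^j\beta$, $j\in\mathbb{Z}_n$, are exactly the $n$ distinct roots of $x^n-\lambda$, which is separable because $(n,q)=1$. Since $g$ is monic and divides $x^n-\lambda$, we have $g(x)=\prod_{j\notin\G}(x-\xi^j\beta)$. So $\deg g=n-|\G|$, and $\dim V_\G=|\G|=n-\deg g$. Equal dimensions together with the inclusion give equality.

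The only step that needs real care is the identification of the zeros of $g$ with the complement of $\G$. This requires that the map $j\mapsto\xi^j\beta$ be injective and exhaust the roots of $x^n-\lambda$. Injectivity follows from $\xi$ being primitive of order $n$, and the count of $n$ distinct roots follows from separability. Everything else is linear algebra.
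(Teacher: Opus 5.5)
Your proof is correct and takes essentially the same route as the paper: both show that the constacyclic DFT of every codeword vanishes off $\G$ (the paper does this for the basis vectors $x^ig(x)$, you for a general $u(x)g(x)$), and then conclude by comparing dimensions. Your version makes explicit the count $\deg g = n-|\G|$, using separability of $x^n-\lambda$ and injectivity of $j\mapsto\xi^j\beta$; the paper uses this tacitly when it asserts that the $n-\deg g$ linearly independent vectors $\FF(x^ig(x))$ fill all of $V_\G$.
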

\begin{proof}
	Let $g(x)$ be the generator polynomial of $\C$. Then the image of  $\Fc \otimes \C$ is spanned by  
 \begin{equation*} \label{equation:Fourier-image}
     \{\FF(x^ig(x)): 0\le i <  n-\deg(g) \}.
 \end{equation*}
 The $j-$th component of $\FF(x^ig(x))$ is nonzero if and only if $j\in \G$. Given that, these vectors are linearly independent, hence the image is precisely $V_\G$. 
\end{proof}

\begin{theorem}
\label{theorem:sumset}
	Let $\C_i$ be $\beta_i^n$--constacyclic codes with complete generating sets $\G_i$ with respect to $\xi$ and $\beta_i$ for $i=1,2$. Then the complete generating set of  $\C_1 \circ \C_2$ with respect to $\xi$ and $\beta_1\beta_2$ is  $\G_1+\G_2$.
\end{theorem}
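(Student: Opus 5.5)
We extend scalars to $\Fc$ (enlarging if necessary so that $\beta_1,\beta_2$ and $\xi$ all lie in one field; we keep writing $\Fc$) and work entirely in the frequency domain. First we note that extension of scalars commutes with the Schur product: $\Fc\otimes(\C_1\circ\C_2)=(\Fc\otimes\C_1)\circ(\Fc\otimes\C_2)$. This holds because both sides are the $\Fc$-span of the products $\vc_1\circ\vc_2$ with $\vc_i\in\C_i$, and the Schur product is bilinear. Since $\C_1\circ\C_2$ is $\beta_1^n\beta_2^n=(\beta_1\beta_2)^n$-constacyclic, Lemma \ref{lemma:image-of-dft} applied with $\beta_1\beta_2$ gives $\FF_{\xi,\beta_1\beta_2}(\Fc\otimes(\C_1\circ\C_2))=V_{\G}$, where $\G$ is its complete generating set. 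It therefore suffices to show that the image of $(\Fc\otimes\C_1)\circ(\Fc\otimes\C_2)$ under $\FF_{\xi,\beta_1\beta_2}$ equals $V_{\G_1+\G_2}$.

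By Proposition \ref{proposition:componentwise-conv}, the transform of $\va\circ\vb$ with respect to $\beta_1\beta_2$ is $\frac1n$ times the cyclic convolution $\vA*\vB$ of $\vA=\FF_{\xi,\beta_1}(\va)$ and $\vB=\FF_{\xi,\beta_2}(\vb)$. By Lemma \ref{lemma:image-of-dft}, $\vA$ and $\vB$ range over all of $V_{\G_1}$ and $V_{\G_2}$ respectively. The required image is therefore the span of $\{\vA*\vB:\vA\in V_{\G_1},\vB\in V_{\G_2}\}$. By bilinearity this span is generated by the convolutions of standard basis vectors $e_s*e_u=e_{s+u}$ with $s\in\G_1$, $u\in\G_2$. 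These vectors span exactly $V_{\G_1+\G_2}$, since every $k\in\G_1+\G_2$ is reached and no index outside it occurs.

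Comparing the two descriptions gives $V_\G=V_{\G_1+\G_2}$, and hence $\G=\G_1+\G_2$. The main point requiring care is that Lemma \ref{lemma:image-of-dft} describes the full image over the big field, so that arbitrary spectral vectors supported on $\G_i$, in particular the basis vectors $e_s$, are attainable. This is the reason for working with $\Fc\otimes\C_i$ rather than $\C_i$ itself, where Proposition \ref{proposition:closure1} would impose Frobenius constraints. A further subtle point is the first step, that $\Fc\otimes-$ commutes with the Schur product and preserves dimension. This follows since the complete generating set of the $\F_q$-code $\C_1\circ\C_2$ is determined by its generator polynomial, which also generates its extension.
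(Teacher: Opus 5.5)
Your argument is correct, and its core matches the paper's. You commute scalar extension with the Schur product, then use Proposition~\ref{proposition:componentwise-conv} and Lemma~\ref{lemma:image-of-dft} to show that the spectral image of $\Fc\otimes(\C_1\circ\C_2)$ is $V_{\G_1+\G_2}$.

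Where you differ is in how this image is tied back to the complete generating set. The paper builds an auxiliary code $\C_3=\langle g_3\rangle$ with $g_3(x)=\prod_{i\notin\G_1+\G_2}(x-\xi^i\beta_1\beta_2)$. It uses Proposition~\ref{proposition:closure2} (closure of $\G_1+\G_2$ under $j\mapsto qj+t_1+t_2$, where $\xi^{t_i}=\beta_i^{q-1}$) to get $g_3\in\Fa[x]$. It then concludes $\C_1\circ\C_2=\C_3$ from equality of their extended spectra, which tacitly uses that an $\Fa$-code is determined by its extension to $\Fc$.

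You instead take from the preliminaries that $\C_1\circ\C_2$ is already $(\beta_1\beta_2)^n$-constacyclic, and apply Lemma~\ref{lemma:image-of-dft} to it directly. This removes the need for both the Frobenius-closure argument and the descent step. You also make explicit two points the paper passes over:
\begin{itemize}
\item the basis computation $e_s*e_u=e_{s+u}$ behind $\operatorname{Conv}(V_{\G_1},V_{\G_2})=V_{\G_1+\G_2}$;
\item the need for a common field containing $\beta_1$ and $\beta_2$.
\end{itemize}

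What the paper's route buys in return is independence from that recalled fact: it reproves that the product is $\lambda_1\lambda_2$-constacyclic and exhibits its generator polynomial explicitly. Your closing remark about dimension preservation is superfluous, since Lemma~\ref{lemma:image-of-dft} is already a statement about $\Fc\otimes\C$ for an $\Fa$-code $\C$.
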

\begin{proof}
	Let \(\G_i\) be the complete generating set of \(\C_i\) with respect to \((\xi,\beta_i)\), and let \(t_i\in\mathbb{Z}_n\) satisfy \(\xi^{t_i}=\beta_i^{q-1}\). By Proposition  \ref{proposition:closure2}, each \(\G_i\) is closed under \(j\mapsto qj+t_i\). Hence \(\G_1+\G_2\) is closed under \(j\mapsto qj+(t_1+t_2)\), and so is its complement.

Therefore, by Proposition \ref{proposition:closure2}
\[
g_3(x)=\prod_{i\in \mathbb{Z}_n\setminus(\G_1+\G_2)}(x-\xi^i\beta_1\beta_2)
\]
has coefficients in \(\mathbb{F}_q\) and divides \(x^n-(\beta_1\beta_2)^n\). Let \(\C_3=\langle g_3(x)\rangle\). By Lemma \ref{lemma:image-of-dft},
\[
\mathcal{F}_{\xi,\beta_1\beta_2}(\mathbb{F}_{q^{m_2}}\otimes \C_3)=V_{\G_1+\G_2}.
\]

On the other hand, scalar extension commutes with the Schur product:
\[
\mathbb{F}_{q^{m_2}}\otimes(\C_1\circ \C_2)
=
(\mathbb{F}_{q^{m_2}}\otimes \C_1)\circ(\mathbb{F}_{q^{m_2}}\otimes \C_2).
\]
By Proposition \ref{proposition:componentwise-conv} and Lemma \ref{lemma:image-of-dft},
\[
\mathcal{F}_{\xi,\beta_1\beta_2}(\mathbb{F}_{q^{m_2}}\otimes(\C_1\circ \C_2))
=
\operatorname{Conv}(V_{\G_1},V_{\G_2})
=
V_{\G_1+\G_2}.
\]

Thus \(\C_1\circ \C_2\) and \(\C_3\) have the same DFT image after scalar extension, hence they are equal. Therefore the complete generating set of \(\C_1\circ \C_2\) is \(\G_1+\G_2\).
\end{proof}

In \cite[Proposition 4.2.2]{gcd}, the Schur square of a cyclic code is computed by fixing a generator polynomial \(g(x)\), taking Schur products of \(g(x)\) with its shifts \(x^jg(x)\), and then taking the gcd of all such products. Using Theorem 13, we extend this method from Schur squares of cyclic codes to Schur products of arbitrary constacyclic codes.
\begin{theorem}
\label{theorem:gcd}

Let $\C_i = \langle g_i(x) \rangle$ be $\lambda_i$-constacyclic codes of length $n$, $i=1,2$, and let $s(x)$ be a polynomial such that $\gcd(s(x), (x^n-\lambda_1)/g_1(x)) = 1$. Then the Schur product $\C_1 \circ \C_2$ is generated by the polynomial
$\gcd(P,\ x^n-\lambda_1\lambda_2),$
where
$$P=\{g\circ x^jg_2:0\le j<n-\deg(g_2)\}\text{ and } g(x)=g_1(x)s(x).$$
\end{theorem}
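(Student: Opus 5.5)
The plan has two parts. First we show that $\C_1\circ\C_2$ equals the $\lambda_1\lambda_2$-constacyclic code $\C'$ generated by the elements of $P$, reduced modulo $x^n-\lambda_1\lambda_2$. Then we use the fact that $\Fa[x]/\langle x^n-\lambda_1\lambda_2\rangle$ is a principal ideal ring, so the ideal generated by a finite set of polynomials is generated by their gcd together with $x^n-\lambda_1\lambda_2$. We know $\C_1\circ\C_2$ is $\lambda_1\lambda_2$-constacyclic, and each $g\circ x^jg_2$ is a Schur product of codewords $g=g_1s\in\C_1$ and $x^jg_2\in\C_2$. Hence $\C'\subseteq \C_1\circ\C_2$, and it remains to prove the reverse inclusion.

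For the reverse inclusion we work with complete generating sets, as in Theorem \ref{theorem:sumset}. Fix $\beta_i$ with $\beta_i^n=\lambda_i$. By Theorem \ref{theorem:sumset}, the complete generating set of $\C_1\circ\C_2$ with respect to $\xi,\beta_1\beta_2$ is $\G_1+\G_2$. The zeros of $\gcd(P,x^n-\lambda_1\lambda_2)$ are the points $\xi^k\beta_1\beta_2$ at which every element of $P$ vanishes. Since $x^n-\lambda_1\lambda_2$ is separable because $(n,q)=1$, the gcd is determined by its zero set. So it suffices to show the following: if $k\in\G_1+\G_2$, then some $j$ gives $\FF_{\xi,\beta_1\beta_2}(g\circ x^jg_2)_k\neq 0$.

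The spectral computation goes through Proposition \ref{proposition:componentwise-conv}. Write $\vA=\FF_{\xi,\beta_1}(g)$ and $\vB^{(j)}=\FF_{\xi,\beta_2}(x^jg_2)$. Then $\FF_{\xi,\beta_1\beta_2}(g\circ x^jg_2)_k=\frac1n\sum_{l}A_lB^{(j)}_{k-l}$. The zeros of $s$ are disjoint from $\G_1$, since these are exactly the roots of $(x^n-\lambda_1)/g_1$. By Proposition \ref{DFT:conv-component}, this gives $\supp(\vA)=\G_1$. Next, $B^{(j)}_m=(\xi^m\beta_2)^jB_m$ with $\supp(\vB)=\G_2$. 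Hence the $k$-th component equals $\frac{\beta_2^j}{n}\sum_{l}A_lB_{k-l}\,\xi^{(k-l)j}$.

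The main obstacle is that this sum could vanish for all $j$ in the restricted range $0\le j<n-\deg g_2=|\G_2|$. Suppose it did. Then the vector $w_m=A_{k-m}B_m$, supported in $\G_2$, would satisfy $\sum_{m\in\G_2}w_m\xi^{mj}=0$ for $|\G_2|$ consecutive values of $j$. This is a square Vandermonde system in the distinct nodes $\xi^m$, $m\in\G_2$, so it forces $w=0$. But $k\in\G_1+\G_2$ gives some $m\in\G_2$ with $k-m\in\G_1$, so $w_m\neq 0$, a contradiction. This shows that the common zeros of $P$ are exactly the complement of $\G_1+\G_2$. Therefore the gcd equals the generator $g_3$ constructed in Theorem \ref{theorem:sumset}, which is the generator of $\C_1\circ\C_2$.
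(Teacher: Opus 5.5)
Your proposal is correct and follows essentially the same route as the paper: $P\subseteq\C_1\circ\C_2$, Theorem \ref{theorem:sumset} for the generating set $\G_1+\G_2$, a proof that the common spectral zeros of $P$ are exactly $\mathbb{Z}_n\setminus(\G_1+\G_2)$, and separability of $x^n-\lambda_1\lambda_2$ to pass to the gcd. Your explicit Vandermonde argument at each $k\in\G_1+\G_2$ is the right way to do the key step, and it is sounder than the paper's wording: the paper asserts that the DFTs of $P$ span all of $V_{\G_1+\G_2}$, which cannot hold in general because $P$ has only $|\G_2|$ elements, whereas the pointwise nonvanishing you prove is all that is needed.
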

\begin{proof}
Since \(g\in \C_1\) and \(x^jg_2\in \C_2\), by definition of the Schur product of codes,
\[
g\circ x^jg_2\in\C_1\circ \C_2.
\]
Hence every polynomial in
\[
P=\{g\circ x^jg_2:0\le j<n-\deg(g_2)\}
\]
is a multiple of the generator polynomial \(g_3(x)\) of \(\C_1\circ \C_2\).
Therefore
\[
g_3(x)\mid \gcd(P).
\]

By Lemma \ref{lemma:image-of-dft}, the \(\mathbb{F}_{q^{m_2}}\)-span of
$\{\mathcal{F}_{\xi,\beta_2}(x^jg_2):0\le j<n-\deg(g_2)\}$
is precisely \(V_{\G_2}\). Since \(\mathcal{F}_{\xi,\beta_1}(g)\) has support \(\G_1\), Proposition \ref{proposition:componentwise-conv} implies that the \(\mathbb{F}_{q^{m_2}}\)-span of the constacyclic DFTs of the polynomials in \(P\) is exactly \(V_{\G_1+\G_2}\).

Hence the common spectral zero set of the polynomials in \(P\) is precisely
$
\mathbb{Z}_n\setminus(\G_1+\G_2)$.

By Theorem 13, this is also the spectral zero set of \(g_3(x)\).

However, the constacyclic DFT only detects roots among the roots of \(x^n-\lambda_1\lambda_2\). Thus we can only conclude that
\[
\gcd(P)=g_3(x)u(x),
\]
where \(u(x)\) is coprime to \(x^n-\lambda_1\lambda_2\). Therefore
\[
g_3(x)=\gcd(\gcd(P),x^n-\lambda_1\lambda_2). \qedhere
\]
\end{proof}
According to \cite[Theorem 2]{Falk}, for a constacyclic code \(\C\), the pattern polynomial of \(\C^{\circ t}\) coincides with the \(t\)-th Schur power of the pattern polynomial of \(\C\). We now extend this principle from Schur powers of a single code to the Schur product of two arbitrary constacyclic codes.

\begin{theorem}
	For $\lambda_i$-constacyclic codes $\C_i$ with pattern polynomials $p_i(x)$, $i=1,2$, the pattern polynomial of $\C_1 \circ \C_2$ is  $p_1(x) \circ p_2(x)$.
\end{theorem}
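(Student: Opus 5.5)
We work entirely on the spectral side, combining Theorem~\ref{theorem:sumset} with Proposition~\ref{pro:smallest-coset}. Fix $\xi$ and $n$-th roots $\beta_i$ of $\lambda_i$. Let $\G_i$ be the complete generating set of $\C_i$ with respect to $\xi,\beta_i$. By Theorem~\ref{theorem:sumset}, the complete generating set of $\C_1\circ\C_2$ with respect to $\xi,\beta_1\beta_2$ is $\G_1+\G_2$. By Proposition~\ref{pro:smallest-coset}, the pattern polynomial of a constacyclic code is determined by the smallest coset containing its complete generating set: that coset equals the spectral support $\SS(p)$ of the pattern polynomial. Its period $v$ is $n/|\langle \G-\G\rangle|$, and the coset representative fixes $\alpha$. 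So it suffices to prove the following two facts.

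\textbf{First}, $\SC(\G_1+\G_2)=\SC(\G_1)+\SC(\G_2)$. This is recalled in the preliminaries; concretely, $\langle(\G_1+\G_2)-(\G_1+\G_2)\rangle=\langle \G_1-\G_1\rangle+\langle \G_2-\G_2\rangle$.

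\textbf{Second}, $\SS_{\xi,\beta_1\beta_2}(p_1\circ p_2)=\SS_{\xi,\beta_1}(p_1)+\SS_{\xi,\beta_2}(p_2)$. This follows from Proposition~\ref{proposition:componentwise-conv}. The spectrum of $p_1\circ p_2$ is the normalized cyclic convolution of the spectra of $p_1$ and $p_2$, so its support lies in the sumset of the supports.

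To get equality in the second fact, I would compute directly rather than argue through cancellations. Write $p_i=\sum_{k=0}^{n/v_i-1}\alpha_i^k x^{kv_i}$. Then $p_1\circ p_2$ has nonzero coefficients only at multiples of $w=\mathrm{lcm}(v_1,v_2)$. At position $kw$ its coefficient is $\alpha_1^{kw/v_1}\alpha_2^{kw/v_2}$, so
$$p_1\circ p_2=\sum_{k=0}^{n/w-1}\gamma^k x^{kw},\qquad \gamma=\alpha_1^{w/v_1}\alpha_2^{w/v_2}.$$
This is again of pattern form. By the computation in the proof of Proposition~\ref{pro:smallest-coset}, its spectral support is a full coset of $\langle n/w\rangle$. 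Since $\langle n/v_1\rangle+\langle n/v_2\rangle=\langle \gcd(n/v_1,n/v_2)\rangle=\langle n/w\rangle$, the sumset of the two cosets is also a coset of $\langle n/w\rangle$. The support is nonempty and contained in that sumset, so the two must coincide.

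Combining the two facts, $\SC(\G_1+\G_2)=\SS(p_1\circ p_2)$. We also have $p_1\circ p_2\mid g_3$, where $g_3$ is the generator of $\C_1\circ\C_2$. One way to see this: $p_1\circ p_2$ vanishes at every $\xi^j\beta_1\beta_2$ with $j$ outside its spectral support, and these roots include all roots of $g_3$. Alternatively, apply the product structure from the preceding theorem. Moreover, the smallest coset containing $\G_1+\G_2$ has period exactly $w$, which gives the minimality of $w$. Hence $p_1\circ p_2$ is the pattern polynomial of $\C_1\circ\C_2$.

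The main obstacle is normalization. The pattern polynomial is pinned down only through $v$ and $\alpha$, so we must check that the coset representative of $\SS(p_1\circ p_2)$ matches $\SC(\G_1+\G_2)$ exactly, not merely a coset of the same subgroup. This matching goes through Proposition~\ref{proposition:componentwise-conv}. We also need divisibility of $g_3$ by a polynomial whose roots are exactly those outside that coset, and this relies on Proposition~\ref{proposition:closure2} with the shift $t_1+t_2$ to ensure it is defined over $\F_q$.
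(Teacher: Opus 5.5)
Your argument is essentially the paper's. Both combine Theorem~\ref{theorem:sumset}, Proposition~\ref{pro:smallest-coset}, the identity $\SC(\G_1+\G_2)=\SC(\G_1)+\SC(\G_2)$, and the support identity \eqref{eq:support-sum}. You justify \eqref{eq:support-sum} more carefully than the paper does: containment comes from Proposition~\ref{proposition:componentwise-conv}, and equality follows because both sides are cosets of $\langle n/w\rangle$. You also need no separate non-degenerate case, since $p=1$ is simply $v=n$.

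The one real difference is how you get $p_1\circ p_2\mid g_3$. The paper derives it from Theorem~\ref{theorem:gcd}, showing that $p_1\circ p_2$ divides every $g_1\circ x^jg_2$. You compare roots directly, which is shorter and bypasses the gcd theorem.

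As written, however, that root comparison runs the wrong way. You say the roots of $p_1\circ p_2$ ``include all roots of $g_3$''. That would give $g_3\mid p_1\circ p_2$, which holds only when $\G_1+\G_2$ is itself a coset.

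The inclusion you need is the reverse one, and your two facts already give it. If $j\notin\SS_{\xi,\beta_1\beta_2}(p_1\circ p_2)=\SC(\G_1+\G_2)$, then $j\notin\G_1+\G_2$, so $g_3(\xi^j\beta_1\beta_2)=0$. Next, $p_1\circ p_2$ divides $x^n-\lambda_1\lambda_2$: from $\alpha_i^{n/v_i}\lambda_i=1$ one gets $\gamma^{n/w}\lambda_1\lambda_2=1$. Hence $p_1\circ p_2$ is squarefree and all its roots have the form $\xi^j\beta_1\beta_2$. So every root of $p_1\circ p_2$ is a root of $g_3$, and $p_1\circ p_2\mid g_3$. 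With this corrected, the proof is complete.

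Smaller slips:
\begin{itemize}
\item $v=|\langle\G-\G\rangle|$, not $n/|\langle\G-\G\rangle|$; the latter is the step $n/v$ of the coset.
\item Proposition~\ref{proposition:closure2} is not needed, since $p_1\circ p_2$ trivially has coefficients in $\F_q$.
\item The ``alternatively'' sentence is only a pointer, not an argument. If it refers to Theorem~\ref{theorem:gcd}, it is exactly the paper's route.
\end{itemize}
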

\begin{proof}
    If either code is non-degenerate, then, by commutativity and associativity,
\(
(\C_1\circ \C_2)^{\circ n}=\C_1^{\circ n}\circ \C_2^{\circ n}=\mathbb{F}_q^n
\).
Hence the pattern polynomial of \(\C_1\circ \C_2\) is trivial. Therefore, we may assume that both codes are degenerate.

Let \(\G_i\) be the complete generating set of \(\C_i\), $i=1,2$, and $k_2=\dim(\C_2)$.

Write
\(
p_i(x)=\sum_{t=0}^{n/v_i-1}\alpha_i^t x^{v_i t}
\),
let \(v_3=\operatorname{lcm}(v_1,v_2)\), and define
\(
\alpha_3=\alpha_1^{v_3/v_1}\alpha_2^{v_3/v_2}
\).
Then
\[
p_3(x):=\sum_{t=0}^{n/v_3-1}\alpha_3^t x^{v_3 t}
      =p_1(x)\circ p_2(x)\mid x^n-\lambda_1\lambda_2.
\]

We first show that \(p_3(x)\mid g_3(x)\), where \(g_3(x)\) is the generator polynomial of \(\C_1\circ \C_2\). By Theorem 14,
\[
g_3(x)=\gcd(g_1\circ g_2,\ g_1\circ xg_2,\ \dots,\ g_1\circ x^{k_2-1}g_2, x^n-\lambda_1\lambda_2).
\]
Thus, it suffices to prove that
\(
p_3(x)\mid (g_1(x)\circ x^j g_2(x))
\)
for every \(0\le j<k_2\).

Since \(\SS_{\xi,\beta_2}(x^j)=\mathbb{Z}_n\), multiplication by \(x^j\) does not affect the spectral support. Hence \(x^j g_2(x)\) has the same complete generating set as \(g_2(x)\). By Propositions 4 and 5,
\[
\SS_{\xi,\beta_1\beta_2}(g_1(x)\circ x^j g_2(x))
\subseteq \SS_{\xi,\beta_1}(p_1(x))+\SS_{\xi,\beta_2}(p_2(x)).
\]
Consequently, every element outside
\(
\SS_{\xi,\beta_1}(p_1(x))+\SS_{\xi,\beta_2}(p_2(x))
\)
is a spectral zero of \(g_1(x)\circ x^j g_2(x)\).

Since \(\SS_{\xi,\beta_1}(p_1(x))\) and \(\SS_{\xi,\beta_2}(p_2(x))\) are cosets with step sizes \(v_1\) and \(v_2\), respectively, their sumset is a coset with step size \(v_3=\operatorname{lcm}(v_1,v_2)\). Hence, by Proposition \ref{proposition:componentwise-conv}
\begin{equation}\label{eq:support-sum}
\SS_{\xi,\beta_1}(p_1(x))+\SS_{\xi,\beta_2}(p_2(x))
=\SS_{\xi,\beta_1\beta_2}(p_3(x)).
\end{equation}
Therefore, every spectral zero of \(p_3(x)\) is also a spectral zero of \(g_1(x)\circ x^j g_2(x)\), and thus
\(
p_3(x)\mid (g_1(x)\circ x^j g_2(x))
\).
Since this holds for all \(j\), we conclude that \(p_3(x)\mid g_3(x)\).

Finally, by Theorem 13, the complete generating set of \(\C_1\circ \C_2\) is
\(
\G_3=\G_1+\G_2
\).

By Proposition \ref{pro:smallest-coset}, we obtain
\[
\SC(\G_3)=\SC(\G_1)+\SC(\G_2)
=\SS_{\xi,\beta_1}(p_1(x))+\SS_{\xi,\beta_2}(p_2(x)).
\]
Together with~\eqref{eq:support-sum}, this shows that
\(p_3(x)\) is the unique minimal pattern polynomial whose spectral support contains \(\G_3\). Hence \(p_3(x)\) is the pattern polynomial of \(\C_1\circ \C_2\), and therefore
\[
p(\C_1\circ \C_2)=p_1(x)\circ p_2(x). \qedhere
\]

\end{proof}

Given that constacyclic codes have inherent combinatorial properties, several characteristics of their Schur product dimension follow from additive combinatorics.
\begin{proposition}
	Let $\C_1$, $\C_2$ be constacyclic codes of dimensions $k_1$, $k_2$ respectively. If $k_1+k_2 > n$, then $\C_1 \circ \C_2 = \Fa^n$.
\end{proposition}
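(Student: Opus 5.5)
The plan is to translate the dimension hypothesis into a statement about sizes of complete generating sets, and then apply the pigeonhole principle in $\mathbb{Z}_n$ together with Theorem \ref{theorem:sumset}.

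First, I relate the dimension to the size of the complete generating set. Write $\C_i=\langle g_i(x)\rangle$ as a $\lambda_i$-constacyclic code. Its dimension is $k_i=n-\deg g_i$. Since $x^n-\lambda_i$ has $n$ distinct roots $\xi^j\beta_i$ (because $(n,q)=1$), the zeros of $g_i$ among these roots number exactly $\deg g_i$. Hence the complete generating set $\G_i$ with respect to $\xi,\beta_i$ satisfies $|\G_i|=k_i$. The same count also follows from Lemma \ref{lemma:image-of-dft}, because $\dim V_{\G_i}=|\G_i|$ equals the dimension of the scalar extension of $\C_i$. By Theorem \ref{theorem:sumset}, the complete generating set of $\C_1\circ\C_2$ with respect to $\xi,\beta_1\beta_2$ is $\G_1+\G_2$.

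Next, I show $\G_1+\G_2=\mathbb{Z}_n$ using the standard pigeonhole argument. Fix any $x\in\mathbb{Z}_n$. The sets $\G_1$ and $x-\G_2$ are subsets of $\mathbb{Z}_n$ of sizes $k_1$ and $k_2$. Since $k_1+k_2>n$, they must intersect. So there exist $a\in\G_1$ and $b\in\G_2$ with $a=x-b$, that is, $x=a+b\in\G_1+\G_2$.

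Finally, I conclude that the code is the whole space. The generator polynomial $g_3$ of $\C_1\circ\C_2$ divides $x^n-\lambda_1\lambda_2$, which splits into distinct linear factors $x-\xi^j\beta_1\beta_2$. Since $g_3$ is nonzero at every such root, it has no root at all and must be the constant $1$. Therefore $\C_1\circ\C_2=\Fa^n$.

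The only delicate point is the first step: checking that $|\G_i|=\dim\C_i$ for the twisted DFT, which requires the roots $\xi^j\beta_i$ to be distinct. This holds because $x^n-\lambda_i$ is separable when $(n,q)=1$.
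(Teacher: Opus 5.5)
Your proof is correct and takes the paper's route. The paper just cites the additive-combinatorics fact that $|A|+|B|>|G|$ forces $A+B=G$, which you prove by pigeonhole, and implicitly combines it with $|\G_i|=k_i$ and Theorem~\ref{theorem:sumset}; you have written those implicit steps out.
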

\begin{proof}
	See \cite[Exercise 2.1.6]{AddiComb} 
\end{proof}

\begin{proposition}
 Let $\C$ be an $[n,k \ge 2]$ constacyclic code with pattern polynomial $\sum_{i=0}^{\frac{n}{v}-1}\alpha^ix^{vi}$. Then
 \[
 r(\C) \leq 
\begin{cases}
\frac{n-1}{k-1}, & \text{if } n \text{ is prime} \\
\frac{2v}{k}, & \text{if } n \text{ is composite}
\end{cases}
 \]
\end{proposition}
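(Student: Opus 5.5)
Via Theorem~\ref{theorem:sumset}, the whole question becomes one about iterated sumsets in $\mathbb{Z}_n$. If $\G$ is the complete generating set of $\C$ with respect to $\xi,\beta$, then $|\G|=k$. Applying Theorem~\ref{theorem:sumset} repeatedly shows that the complete generating set of $\C^{\circ i}$ with respect to $\xi,\beta^i$ is the $i$-fold sumset $i\G$. Hence $\dim(\C^{\circ i})=|i\G|$, and $r(\C)$ is the first $i$ at which $|i\G|$ stops growing. By $\SC(A+B)=\SC(A)+\SC(B)$ and Proposition~\ref{pro:smallest-coset}, every $i\G$ lies in the coset $\SC(i\G)=i\,\SC(\G)$. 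This coset is a translate of $\langle n/v\rangle$ and has $v$ elements. So the sequence $|i\G|$ is bounded by $v$, and it stabilizes once $i\G$ fills the coset $i\,\SC(\G)$. The stabilization criterion also needs care: a chain of sumsets could stop growing before filling the coset, but the strict-increase-until-stable property recalled in the preliminaries rules out a plateau, so it suffices to bound the first $i$ with $|i\G|=v$.

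\emph{Prime case.} Here $\mathbb{Z}_n$ has no nontrivial proper subgroups. If $k\ge 2$, then $\G-\G$ contains a nonzero element, so $\SC(\G)=\mathbb{Z}_n$ and $v=n$. We apply the Cauchy--Davenport theorem inductively:
\[
|i\G|\ \ge\ \min\{n,\ i(k-1)+1\}.
\]
Thus $i\G=\mathbb{Z}_n$ as soon as $i(k-1)+1\ge n$, i.e.\ for $i\ge \frac{n-1}{k-1}$. Hence $r(\C)\le \lceil\frac{n-1}{k-1}\rceil$, which matches the stated bound.

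\emph{Composite case.} First translate: replacing $\G$ by $\G-g_0$ for some $g_0\in\G$ does not change any cardinality $|i\G|$. After this, $\G$ lies in the subgroup $H=\langle \G-\G\rangle$, which is cyclic of order $v$, and $\G$ generates $H$. So we may work in $H\cong\mathbb{Z}_v$ with a generating set $\G$ of size $k$ that contains $0$. The bound $\frac{2v}{k}$ is what one expects from a Kneser-type argument. By Kneser's theorem, $|A+B|\ge |A+K|+|B+K|-|K|$, where $K$ is the stabilizer of $A+B$. I would use it to show that each step of adding $\G$ either increases the size by roughly $k/2$ or produces a periodic set. The simpler route I would try first is the known fact that for a generating set $A\ni 0$ of a finite abelian group $H$, $|2A|\ge \min(|H|,\tfrac32|A|)$. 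The cleaner target, however, is the standard lemma: if $A\ni 0$ generates $H$ and $iA\neq H$, then $|(i+1)A|\ge |iA|+\tfrac{|A|}{2}$. This lemma is proved via Kneser: if the stabilizer $K$ of $(i+1)A$ is nontrivial, pass to $H/K$ and use that $A$ meets at least two cosets of $K$. Iterating it gives $|iA|\ge k+(i-1)k/2$, which reaches $v$ once $i\ge \frac{2v}{k}-1$. This yields $r(\C)\le \frac{2v}{k}$, with room to spare.

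The main obstacle is this composite step, namely establishing the growth of $k/2$ per step under possible periodicity. If the lemma resists, I would fall back on citing the known diameter bound for Cayley graphs of abelian groups with respect to generating sets. That bound states that every element of $H$ is a sum of at most about $2|H|/|A|$ elements of $A\cup\{0\}$, and it gives the same bound $2v/k$ directly.
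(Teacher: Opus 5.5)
Your proposal is correct and follows the paper's route. The paper reduces to the iterated sumsets $i\G$ via Theorem~\ref{theorem:sumset}, translates $\G$ to contain $0$, and cites additive-combinatorics exercises (Cauchy--Davenport for prime $n$, a Kneser-type order bound in the coset of size $v$ for composite $n$); you prove those exercises inline, and your Kneser iteration even gives $r(\C)\le\lceil 2v/k\rceil-1$.

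Two small precisions are needed. The growth lemma should read $|(i+1)A|\ge\min\{v,\,|iA|+|A|/2\}$, since as stated it fails when $(i+1)A=H$. And the prime case gives $r(\C)\le\lceil\frac{n-1}{k-1}\rceil$, which is the only valid reading of the statement: the cyclic code $\langle x-1\rangle$ of length $5$ has $k=4$ and $r(\C)=2>\frac{4}{3}$.
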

\begin{proof}
	See \cite[Exercises 12.1.1, 12.1.2]{AddiComb}. 
 The complete generating set $\G$ can be translated to contain $0$ by changing $\beta$, allowing the application of additive combinatorial bounds on the order of subsets of $\mathbb{Z}_n$. 
\end{proof}

For a non-degenerate constacyclic code $\C$ with complete generating set $\G \subseteq \mathbb{Z}_n$, the Fourier bias $||\G||_u $ provides a sufficient criterion for the iterated sumset $s\mathcal{G}$ to cover the whole group $ \mathbb{Z}_n$ (equivalently, $\C^{\circ s}=\F^n_q$). As defined in \cite{AddiComb}, $||\G||_u $ is the supremum over all non-zero frequencies of the absolute Fourier coefficients of indicator function $\hat{1_{\G}}$.
\begin{proposition}
	Let $\C$ be a $[n,k]$ non-degenerate constacyclic code with complete generating set $\G$ with Fourier bias $||\G||_u$. Then 
 $$\forall s \geq \mathop{\mathrm{max}}(3, \frac{2log||\G||_u - log\frac{n}{k}}{log||\G||_u - log\frac{n}{k}}), \hspace{3pt}  \C^{\circ s}=\F^n_q.$$
\end{proposition}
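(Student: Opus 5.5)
The plan is to transfer the statement entirely to additive combinatorics in $\mathbb{Z}_n$, using Theorem~\ref{theorem:sumset}. Applying that theorem $s-1$ times (with $\beta_1=\beta_2=\cdots=\beta$), the complete generating set of $\C^{\circ s}$ with respect to $\xi$ and $\beta^s$ is the iterated sumset $s\G$. By Lemma~\ref{lemma:image-of-dft}, the dimension of a constacyclic code equals the size of its complete generating set. Hence $\C^{\circ s}=\F_q^n$ holds if and only if $s\G=\mathbb{Z}_n$. Also $|\G|=k$, since $\G$ is the complement of the spectral zero set of $g$ and $\deg g=n-k$. It therefore suffices to prove the following: if $A\subseteq\mathbb{Z}_n$ has density $\delta=k/n$ and Fourier bias $\|A\|_u$ satisfying the stated inequality, then $sA=\mathbb{Z}_n$ for all such $s$.

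To prove this, fix $x\in\mathbb{Z}_n$ and count representations $r_s(x)=|\{(a_1,\dots,a_s)\in A^s: a_1+\cdots+a_s=x\}|$. We need $r_s(x)>0$. By Fourier inversion,
\[
r_s(x)=\frac{1}{n}\sum_{\gamma\in\mathbb{Z}_n}\widehat{1_A}(\gamma)^s\,e^{2\pi i\gamma x/n}.
\]
The term $\gamma=0$ contributes the main term $k^s/n$. The remaining terms are bounded by $\|A\|_u^{s-2}\sum_{\gamma}|\widehat{1_A}(\gamma)|^2$, and Parseval gives $\sum_\gamma|\widehat{1_A}(\gamma)|^2=nk$. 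So $r_s(x)>0$ holds whenever
\[
\frac{k^s}{n} > \|A\|_u^{s-2}\,k .
\]
This is the standard argument of \cite[Lemma 4.13-type estimate]{AddiComb}, and it is where $s\ge 3$ is needed: for $s=2$ the error term cannot be controlled by $\|A\|_u$.

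The main obstacle I expect is matching this inequality to the normalization used in the statement. The statement's formula involves $\log\|\G\|_u$ and $\log(n/k)$, which suggests that the bias is normalized, for example as $\|A\|_u=\sup_{\gamma\ne0}|\widehat{1_A}(\gamma)|/n$ or relative to density. I would take logarithms in the inequality, written in the normalization of \cite{AddiComb}, and solve for $s$. That should give
\[
s\,(\log\|\G\|_u-\log\tfrac{n}{k})\ \text{vs.}\ 2\log\|\G\|_u-\log\tfrac nk ,
\]
with the sign of the bracket determining the direction of the inequality. Rearranging then yields the threshold $s\ge\frac{2\log\|\G\|_u-\log\frac nk}{\log\|\G\|_u-\log\frac nk}$.

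Finally, non-degeneracy enters only to make the hypothesis meaningful. If $\G$ lay in a proper coset $a+H$, then $\widehat{1_\G}$ would have modulus $k$ at every nonzero frequency annihilating $H$. The bias would then be maximal and the bound vacuous, consistent with Proposition~\ref{pro:smallest-coset}. Combining the counting argument with the sumset identification finishes the proof.
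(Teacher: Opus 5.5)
Your approach is the same as the paper's intended one, and everything up to the last step is sound. The paper's proof is only a pointer to Tao--Vu, Lemma 4.13, applied to $s\G$ through Theorem~\ref{theorem:sumset} and Lemma~\ref{lemma:image-of-dft}. Your counting bound $k^s/n>\|\G\|_u^{s-2}k$ (unnormalized bias) is exactly the content of that lemma; the intermediate error estimate should carry a factor $1/n$, but your final inequality is right.

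\textbf{The gap.} You only assert the last step (``that should give''), and it does not go through. Write $b=\sup_{\gamma\neq0}|\widehat{1_{\G}}(\gamma)|/n$ for the bias in Tao--Vu's normalization and $\delta=k/n$. Your inequality becomes $\delta^{s-1}>b^{s-2}$, i.e.
\[
s\bigl(\log b+\log\tfrac{n}{k}\bigr)<2\log b+\log\tfrac{n}{k} .
\]
Non-degeneracy gives $\langle\G-\G\rangle=\mathbb{Z}_n$. So for $\gamma\neq0$ the phases $e^{-2\pi i\gamma a/n}$, $a\in\G$, are not all equal, hence $b<\delta$ and the bracket $\log(b/\delta)$ is negative. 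The threshold you actually obtain is therefore
\[
s>\frac{2\log b+\log(n/k)}{\log b+\log(n/k)}=2+\frac{\log(n/k)}{\log(\delta/b)}.
\]
This has $+\log\frac{n}{k}$ where the statement has $-\log\frac{n}{k}$. (Applying Parseval only over $\gamma\neq0$, which gives $\delta-\delta^2$, even allows $\ge$ here.)

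\textbf{The stated bound is false.} A different normalization cannot fix this. Since $b\le\delta$, the stated expression equals $2-\frac{\log(n/k)}{|\log b|+\log(n/k)}\le 2$. So the statement asserts $\C^{\circ 3}=\F_q^n$ for every non-degenerate constacyclic code. Counterexample: take $q=11$, $n=10$, $\lambda=1$, $\beta=1$ (so $t=0$), and $\G=\{0,1\}$.
\begin{itemize}
\item The map $j\mapsto qj+t$ is the identity on $\mathbb{Z}_{10}$, so every subset is an admissible complete generating set.
\item $\gcd(\G-\min\G,n)=1$, so the code is non-degenerate.
\item $3\G=\{0,1,2,3\}$, so $\dim\C^{\circ3}=4<10$.
\end{itemize}
In this example the stated threshold is also below $3$ for the unnormalized bias and for the bias relative to $k$.

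Your argument, like the paper's citation, proves only the corrected statement, with $\log\frac{k}{n}$ in place of $\log\frac{n}{k}$. State that version and carry out the algebra explicitly rather than asserting that it matches.
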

\begin{proof}
	See \cite[Lemma 4.13]{AddiComb}.
\end{proof}

\section{Conclusion}
This paper provides a systematic framework for studying the Schur product of constacyclic codes. We extend the concept of degenerate cyclic codes to constacyclic codes via pattern polynomials and characterized their structure using the constacyclic DFT. Building on this foundation, we generalize two key methods—sumset and GCD-based—for computing the Schur product from cyclic to constacyclic codes. The combinatorial nature of constacyclic codes further allows us to derive bounds on their Castelnuovo–Mumford regularity and conditions for Schur powers to fill the space, leveraging results from additive combinatorics. This work bridges the gap between cyclic and constacyclic code products, offering tools for applications in coding theory and cryptography. 
\newpage

\bibliographystyle{IEEEtran} 
\bibliography{references}

\end{document}